\newcommand{\R}{{\mathbb R}}
\newtheorem{theorem}{Theorem}[section]
\newtheorem{proposition}[theorem]{Proposition}
\newtheorem{lemma}[theorem]{Lemma}
\theoremstyle{definition}
\numberwithin{equation}{section}
\begin{document}

\noindent 
\begin{center}
\textbf{\large MIMO capacity for deterministic channel models: sublinear growth}
\end{center}

\begin{center}
November 6, 2011
\end{center}

\vspace{0.5cm}

\noindent 

\begin{center}
\textbf{ 
Fran{\c c}ois Bentosela\footnote{Centre de Physique Th{\' e}orique, Aix-Marseille Univ, CNRS UMR 6207, 
13288 Marseille Cedex 9, France}, 
Horia D. Cornean\footnote{Department of Mathematical Sciences, 
    Aalborg
    University, Fredrik Bajers Vej 7G, 9220 Aalborg, Denmark},
Nicola Marchetti\footnote{CTVR/The Telecommunications Research Centre, Trinity College, Dublin 2, Ireland}
   }

\end{center}

\noindent

\begin{abstract}
This is the second paper of the authors 
in a series concerned with the development of a deterministic model for the transfer matrix of a MIMO system. 
Starting from the Maxwell equations, we have described in \cite{BCFM}
the generic structure of such a deterministic transfer matrix. In the current paper 
we apply the results of \cite{BCFM} in order to study the
(Shannon-Foschini) capacity behavior of a MIMO system as a function of the
deterministic spread function of the environment, and the number of transmitting  
and receiving antennas. The antennas are assumed to fill in a given, fixed volume. 
Under some generic assumptions, we prove that 
the capacity grows much more slowly than linearly with the number of
antennas. These results reinforce previous heuristic results obtained
from statistical models of the transfer matrix, which also predict a sublinear behavior.
\end{abstract}

\vspace{0.5cm}

\noindent{\bf Keywords:} MIMO systems, Shannon-Foschini capacity, deterministic spread function, Fredholm determinants.

\vspace{0.5cm}

\noindent{\bf AMS subject classifications:} 15A18, 35P15, 47B10.

\vspace{0.5cm}

\section{Introduction and the main result}
Multiple Input Multiple Output (MIMO) is a powerful technology for increasing data rates in wireless telecommunication. 
Experimental and theoretical studies show the increase of the capacity 
(number of bits transmitted per second) when the number of the transmitting ($TX$) and receiving ($RX$) antennas also increases. 
Recall (see \cite{Fo, Te}) that when the channel is unknown to the transmitter, the Shannon-Foschini capacity is given by  
\begin{equation}\label{capa}
C(M_T,M_R):=\log_2 \;{\rm det} \Big(\mathbb{I}_{M_R}+\frac{E_T}{M_T N_0}\mathbb{H}\mathbb{H}^*\Big),
\end{equation} 
where $M_T$ is the number of $TX$ antennas, $M_R$ is the number of $RX$ antennas, $\mathbb{I}_{M_R}$ is the $M_R\times M_R$ 
identity matrix, $E_T$ is the average total energy transmitted by the  $TX$ antennas, $N_0$ is the variance of the noise, 
$\mathbb{H}$ is the $M_R\times M_T$ channel  transfer matrix which establishes the linear relationship between the signals at 
$RX$ antennas and the signals at the $TX$ antennas.

Starting from the Maxwell equations, we have shown in \cite{BCFM} what is the generic structure of such a transfer matrix (see below \eqref{iulie11-1}). 
In the present 
paper we apply the formula obtained in \cite{BCFM} and study the behavior of the MIMO capacity as a function of $M_T$, $M_R$ and of the
deterministic spread function of the environment. The antennas are assumed to fill in a given, fixed volume. 
According to \cite{BCFM}, under certain conditions the transfer matrix can be well approximated by: 
\begin{equation}\label{iulie11-1}
\mathbb{H}=\int_{\mathbb{S}^2\times \mathbb{S}^2}d\Omega_R d\Omega_T \;a_R(\Omega_R) s(\Omega_R,\Omega_T) a_T(\Omega_T),
\end{equation}
where $\mathbb{S}^2$ is the two dimensional sphere, $s(\Omega_R,\Omega_T)$ is a $6\times 6$ matrix called the spread 
function which contains the scattering information of the environment, 
$$a_T(\Omega_T):=\{a_T^{im}(\Omega_T)\in \mathbb{C}: i\in\{1,2,\dots, 6\}, m\in\{1,2,\dots, M_T\}\}$$
is a $6\times M_T$ matrix valued map which describes the radiation pattern of the transmitting system, while 
$$a_R(\Omega_R):=\{a_R^{mj}(\Omega_R)\in \mathbb{C}:  m\in\{1,2,\dots, M_R\},j\in\{1,2,\dots, 6\}\}$$
is a $M_R\times 6$ matrix valued map which describes the receiving system. We assume that all 
$a_T^{im}$ and $a_R^{mj}$ are continuous functions of the angles.

The index $m$ accounts for the placement of the $m$'th transmitting/receiving antenna. 
Assuming that the transmitting/receiving antennas are placed in a finite volume $V_{T/R}$, 
the distance between them becomes smaller. 
Moreover, reasoning in terms of Riemann sums, we will assume that there exist two 
smooth kernels $\mathcal{A}_{T/R}^{ij}(\Omega,\Omega')$ such that:
\begin{align}\label{iulie11-12}
&\sup_{\Omega,\Omega'\in \mathbb{S}^2}\left |\mathcal{A}_{T}^{ij}(\Omega,\Omega')-
\frac{1}{M_{T}}\sum_{m=1}^{M_{T}}a_{T}^{im}(\Omega)\overline{a_{T}^{jm}(\Omega')}\right 
|=\mathcal{O}(M_{T}^{-1}),\nonumber \\
&\sup_{\Omega,\Omega'\in \mathbb{S}^2}\left |\mathcal{A}_{R}^{ij}(\Omega,\Omega')-
\frac{1}{M_{R}}\sum_{m=1}^{M_{R}}\overline{a_{R}^{mi}}(\Omega)a_{R}^{mj}(\Omega')\right 
|=\mathcal{O}(M_{R}^{-1}),\quad i,j\in\{1,2,\dots,6\}.
\end{align}

And here is our main result:

\begin{theorem}\label{theo} Assume that the spread function $s(\Omega_R,\Omega_T)$ is the kernel of a Hilbert-Schmidt operator. 
We have the following situations: 

\noindent {\rm (i)}. If $M_R$ is fixed, then $$\lim_{M_T\to\infty} C(M_T,M_R) <\infty;$$ 

\noindent {\rm (ii)}. If $M_T$ is fixed, then  
$$ \lim_{M_R\to\infty}\frac{C(M_T,M_R)}{\ln(M_R)}\leq \frac{M_T}{\ln(2)} ;$$

\noindent {\rm (iii)}. Let $M=M_T$ and assume that $M_R=aM$ for some constant $a>0$. 
If the operator generated by the spread function of the environment 
has finite rank $N<\infty$, then:
\begin{equation}\label{iulie11-13}
\limsup_{M\to\infty}\frac{C(M,aM)}{\ln(M)}\leq \frac{N}{\ln(2)} .
\end{equation}

\noindent {\rm (iv)}. Let $M=M_T$ and assume that $M_R=aM$ for some constant $a>0$. If the 
spread function is $C^\infty$ in both angular 
variables, then for every $\epsilon>0$ we have:
\begin{equation}
\lim_{M\to\infty}\frac{C(M,aM)}{M^\epsilon}=0.
\end{equation}

\end{theorem}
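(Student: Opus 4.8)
\emph{Proof strategy.} The common engine behind all four statements is to realise $\mathbb{H}$ as a product of three operators and then control the singular values of $\mathbb{H}$ both through a rank bound and through operator norms that grow at most linearly in $M_T$ and $M_R$. Introduce the Hilbert space $\mathcal{H}=L^2(\mathbb{S}^2)^6$, let $S\colon\mathcal{H}\to\mathcal{H}$ be the Hilbert--Schmidt operator with kernel $s(\Omega_R,\Omega_T)$, and define $\Psi_T\colon\mathbb{C}^{M_T}\to\mathcal{H}$ by $(\Psi_T c)_j(\Omega)=\sum_{n=1}^{M_T}a_T^{jn}(\Omega)c_n$ and $\Phi_R\colon\mathcal{H}\to\mathbb{C}^{M_R}$ by $(\Phi_R f)_m=\sum_{i=1}^{6}\int_{\mathbb{S}^2}a_R^{mi}(\Omega)f_i(\Omega)\,d\Omega$. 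A direct computation from \eqref{iulie11-1} gives $\mathbb{H}=\Phi_R S\Psi_T$. Moreover the integral operators $\Phi_R^*\Phi_R$ and $\Psi_T\Psi_T^*$ on $\mathcal{H}$ have kernels $\sum_m\overline{a_R^{mi}(\Omega)}a_R^{mj}(\Omega')$ and $\sum_n a_T^{jn}(\Omega)\overline{a_T^{j'n}(\Omega')}$, so by \eqref{iulie11-12} they differ in Hilbert--Schmidt norm from $M_R\mathcal{A}_R$ and $M_T\mathcal{A}_T$ by $\mathcal{O}(1)$; since $\mathcal{A}_{R},\mathcal{A}_T$ are bounded operators (smooth kernels on a compact set), this yields $\|\Phi_R\|^2=\|\Phi_R^*\Phi_R\|\le c_R M_R$ and $\|\Psi_T\|^2=\|\Psi_T\Psi_T^*\|\le c_T M_T$ for constants $c_R,c_T$ and all large $M_R,M_T$. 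Writing $\mu_1\ge\mu_2\ge\cdots$ for the eigenvalues of $\mathbb{H}\mathbb{H}^*$, using $\mu_k=\sigma_k(\mathbb{H})^2$ and $\sigma_k(\Phi_R S\Psi_T)\le\|\Phi_R\|\,\sigma_k(S)\,\|\Psi_T\|$ gives the two basic facts
\begin{equation}\label{pp-1}
\mu_k\le c_R c_T\, M_R M_T\,\sigma_k(S)^2,\qquad \#\{k:\mu_k\neq 0\}=\operatorname{rank}\mathbb{H}\le\min\{M_R,M_T,\operatorname{rank}S\},
\end{equation}
while $C(M_T,M_R)=\tfrac{1}{\ln 2}\sum_{k\ge1}\ln\!\big(1+\tfrac{E_T}{M_T N_0}\mu_k\big)$.

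For (i), (ii), (iii) one only needs \eqref{pp-1} together with $\sigma_k(S)\le\|S\|$. In case (iii), $M_R=aM$, $M_T=M$ and $\operatorname{rank}S=N$, so at most $N$ terms survive and $\tfrac{E_T}{M N_0}\mu_k\le \tfrac{E_T c_Rc_T a\|S\|^2}{N_0}M$; hence $C(M,aM)\le \tfrac{N}{\ln 2}\ln(1+\mathrm{const}\cdot M)=\tfrac{N}{\ln 2}\ln M+\mathcal{O}(1)$, which is \eqref{iulie11-13}. Case (ii) is identical with $N$ replaced by the fixed $M_T$ and $M$ by $M_R$, giving $C(M_T,M_R)\le\tfrac{M_T}{\ln 2}\ln M_R+\mathcal{O}(1)$ and hence the asserted bound $M_T/\ln 2$ on $\limsup_{M_R\to\infty}C/\ln M_R$. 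In case (i), with $M_R$ fixed, at most $M_R$ terms survive and $\tfrac{E_T}{M_T N_0}\mu_k\le \tfrac{E_T c_Rc_T\|S\|^2}{N_0}M_R$ uniformly in $M_T$, so $C(M_T,M_R)$ stays bounded; to see the limit actually exists, write $\tfrac{E_T}{M_T N_0}\mathbb{H}\mathbb{H}^*=\tfrac{E_T}{N_0}\Phi_R S\big(\tfrac{1}{M_T}\Psi_T\Psi_T^*\big)S^*\Phi_R^*$ and note that by \eqref{iulie11-12} $\tfrac{1}{M_T}\Psi_T\Psi_T^*\to\mathcal{A}_T$ in Hilbert--Schmidt norm, so the right-hand side converges, as operators on the fixed space $\mathbb{C}^{M_R}$, to $\tfrac{E_T}{N_0}\Phi_R S\mathcal{A}_T S^*\Phi_R^*$, whence $C(M_T,M_R)\to\log_2\det\!\big(\mathbb{I}_{M_R}+\tfrac{E_T}{N_0}\Phi_R S\mathcal{A}_T S^*\Phi_R^*\big)<\infty$.

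Case (iv) is where the $C^\infty$ regularity of $s$ is used. The plan is: (a) show that a smooth kernel on $\mathbb{S}^2\times\mathbb{S}^2$ produces an operator $S$ whose singular values beat every power, i.e. for every $p$ there is $c_p$ with $\sigma_k(S)\le c_p k^{-p}$; (b) combine this with \eqref{pp-1} and split the capacity sum at a slowly growing cutoff. For (a), write $S=(1+\Delta)^{-p}\big[(1+\Delta)^pS\big]$ with $\Delta$ the scalar Laplacian on $\mathbb{S}^2$ acting in the receiving variable: smoothness of $s$ makes $(1+\Delta)^pS$ Hilbert--Schmidt, while $(1+\Delta)^{-p}$ has eigenvalues $(1+\ell(\ell+1))^{-p}$ with multiplicity $2\ell+1$, hence $k$-th singular value $\mathcal{O}(k^{-p})$, so $\sigma_k(S)\le\sigma_k\big((1+\Delta)^{-p}\big)\,\|(1+\Delta)^pS\|=\mathcal{O}(k^{-p})$ (a spherical-harmonics expansion of $s$ gives the same conclusion). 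For (b), with $M_R=aM$, $M_T=M$, $X=\tfrac{E_T}{M N_0}$, \eqref{pp-1} gives $X\mu_k\le D M\sigma_k(S)^2$ for a constant $D$, so for any cutoff $K$,
\begin{equation}\label{pp-2}
C(M,aM)\le\frac{1}{\ln 2}\Big[K\,\ln\!\big(1+D M\sigma_1(S)^2\big)+D M\!\!\sum_{k>K}\sigma_k(S)^2\Big]\le\frac{1}{\ln 2}\Big[K(\ln M+\mathcal{O}(1))+D M\!\!\sum_{k>K}\sigma_k(S)^2\Big],
\end{equation}
using $\ln(1+x)\le x$ on the tail and $\sigma_k(S)\le\sigma_1(S)$ on the head. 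Fix $\epsilon>0$ and take $K=K(M)=\lceil M^{\epsilon/2}\rceil$, so $K(\ln M+\mathcal{O}(1))=o(M^\epsilon)$; by (a), $\sum_{k>K}\sigma_k(S)^2=\mathcal{O}\big(K^{-(2p-1)}\big)=\mathcal{O}\big(M^{-(\epsilon/2)(2p-1)}\big)$, and choosing $p$ with $\epsilon(2p-1)>2$ makes $D M\sum_{k>K}\sigma_k(S)^2=\mathcal{O}\big(M^{\,1-(\epsilon/2)(2p-1)}\big)=o(1)$. Hence $C(M,aM)=o(M^\epsilon)$, and since $\epsilon>0$ was arbitrary this is the claim.

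The routine ingredients --- the factorisation $\mathbb{H}=\Phi_R S\Psi_T$, extracting the norm bounds from \eqref{iulie11-12}, and the elementary $\ln(1+x)$ estimates --- pose no difficulty. The genuine technical point is step (a) of case (iv): converting $C^\infty$ regularity of the kernel into a quantitative, power-law-beating decay of the singular values of $S$ in a form clean enough to insert into \eqref{pp-2}; the elliptic-smoothing identity above (or the spherical-harmonics route) does it, but it must be written out with care. A secondary subtlety is the honest verification in (i) that the limit exists rather than merely $\limsup$, which is what the Hilbert--Schmidt convergence $\tfrac{1}{M_T}\Psi_T\Psi_T^*\to\mathcal{A}_T$ is for.
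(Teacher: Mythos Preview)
Your proof is correct and takes a genuinely different, more direct route than the paper's. The paper first rewrites the capacity as a Fredholm determinant on $[L^2(\mathbb{S}^2)]^6$ (Proposition~\ref{cor1}), proves a trace--class perturbation lemma for $\mathrm{Ln}(1+T)$ (Lemma~\ref{lemmasept}), and then for each case replaces $A_{T}/M_T$ or $A_R/M_R$ by its limit $\mathcal{A}_{T/R}$ inside the Fredholm trace. In (iii) the paper further reduces to an $N\times N$ matrix determinant via explicit Gram matrices, and in (iv) it projects $\mathcal{S}$ onto the low spectral subspace $\tilde P_E$ of the Laplace--Beltrami operator and re-invokes the (iii) machinery. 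Your approach bypasses the infinite-dimensional trace formalism entirely: the factorisation $\mathbb{H}=\Phi_R S\Psi_T$, the rank bound $\operatorname{rank}\mathbb{H}\le\min\{M_R,M_T,\operatorname{rank}S\}$, and the singular-value submultiplicativity $\sigma_k(\Phi_R S\Psi_T)\le\|\Phi_R\|\,\sigma_k(S)\,\|\Psi_T\|$ give (i)--(iii) in a couple of lines. For (iv) you use the same elliptic smoothing idea, but instead of truncating $S$ by a spectral projector and calling (iii), you translate smoothness directly into $\sigma_k(S)=\mathcal{O}(k^{-p})$ and split the log-sum at $K\sim M^{\epsilon/2}$; this is tighter and avoids the paper's auxiliary $N\times N$ Gram-matrix bookkeeping. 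What the paper's approach buys is slightly more: its perturbation lemma gives an explicit limiting expression in (i) (your $\Phi_R S\mathcal A_T S^*\Phi_R^*$ argument achieves the same), and in (ii) it actually shows the limit \emph{exists} (equal to $d/\ln 2$ with $d=\operatorname{rank}\mathcal{D}_R$), whereas you only bound the $\limsup$. If you want the honest limit in (ii), note that your $\mathcal{O}(1)$ Hilbert--Schmidt control on $\Phi_R^*\Phi_R-M_R\mathcal{A}_R$ gives $\tfrac{E_T}{M_TN_0}\mathbb{H}^*\mathbb{H}=M_RQ+R$ on $\mathbb{C}^{M_T}$ with $Q$ fixed and $\|R\|=\mathcal{O}(1)$, so by Weyl's inequality each eigenvalue contributes $1$ or $0$ to $C/\ln M_R$ according as the corresponding eigenvalue of $Q$ is positive or zero.
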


\vspace{0.5cm}

\noindent{\bf Remark 1}. Let us go through some of the previous results obtained with probabilistic models for the channel transfer matrix. 
In the case when the distance between antennas is kept constant, some theoretical studies \cite{P, M, Al} conclude that the 
capacity grows {\it linearly} with the number of antennas. Still for probabilistic models, if the antennas are forced to occupy a fixed volume 
\cite{G1, W} then one has to consider correlations between them. This is done by introducing some ad-hoc 
correlation matrices depending on the interelement distances. In this case, they observe that the capacity either grows at most like a 
{\it logarithm} 
\cite{W}, or even converges to some {\it finite value} \cite{G1}. In \cite{Po} it is also suggested that the capacity should tend 
to a limit as the number of antennas increases in a fixed volume. 

\vspace{0.5cm}

\noindent{\bf Remark 2}. In contrast with the probabilistic models where the correlations between antennas are introduced rather arbitrarily, 
in \cite{BCFM} we developped a deterministic ab-initio model for the channel transfer matrix (see \eqref{iulie11-1}) 
which implicitely takes into account these correlations, through the matrices $a_T$ and $a_R$ which completely describe the radiation patterns of the 
transmitting and receiving arrays, while the spread function $s(\Omega,\Omega')$ describes the scattering environment.  

The mathematical technical assumptions in Theorem \ref{theo} are physically natural, thus our results confirm that the capacity of a system in a 
realistic environment grows more slowly than linearly. For example, in {\it (i)}  we obtain 
that the capacity saturates when $M_T$ grows and $M_R$ is kept fixed; 
the physical explanation is that the spread function cannot convey enough transmit spatial diversity to the receiving side. Similarly, 
if $M_T$ is kept fixed as in {\it (ii)}, there is not enough transmit spatial diversity to start with  
and the capacity only increases as $\ln(M_R)$. When 
both $M_T$ and $M_R$ grow proportionally at the same time, then if there is not enough spatial diversity in the scattering environment as it happens in 
{\it (iii)}, we again only get a logarithmic growth. Finally, when both $M_T$ and $M_R$ grow and the spread function is varying 
smoothly, the growth is slower than any positive power of $M$. 

\vspace{0.5cm}

\noindent{\bf Remark 3}. Two important parameters which implicitely appear in {\it (iii)} and {\it (iv)} are on one hand 
the value of the rank, and on the other hand the speed of oscillations of the spread function. Our proofs implicitely 
show that these factors are maybe more important in the capacity growth than the number of antennas. For the same distribution 
of antennas, the capacity should be larger if the environment contains a lot of scatterers and the spread 
function is very irregular.  

\vspace{0.5cm}

The structure of our paper is as follows: in Section \ref{section1} we express the capacity as a 
Fredholm determinant of an integral operator whose integral kernel depends on $ M_T$ and $M_R$ in a way which 
is easier to deal with when these numbers grow. 
In  Section \ref{section2} we prove that the capacity saturates as a function of $M_T$,  
while it can grow with $M_R$ either logarithmically or power-like, but with arbitrarily small exponents.

\section{Shannon-Foschini capacity as a Fredholm determinant}\label{section1}
 
The main result of this section is contained in Proposition \ref{cor1}, but we need to start with a few technical lemmas.
\begin{lemma}\label{lemma1}
 Introduce the bounded linear maps $A_{T/R}\in B([L^2(\mathbb{S}^2)]^6)$ defined by ($\Psi\in [L^2(\mathbb{S}^2)]^6$):
\begin{equation}\label{iulie11-4}
(A_{T}\Psi)(\Omega):=a_{T}(\Omega)
\int_{\mathbb{S}^2} d\Omega' \;a^*_{T}(\Omega')\Psi(\Omega'),\quad (A_{R}\Psi)(\Omega):=a_{R}^*(\Omega)
\int_{\mathbb{S}^2} d\Omega' \;a_{R}(\Omega')\Psi(\Omega').
\end{equation}
Then $A_{T/R}$ are self-adjoint and non-negative operators.
\end{lemma}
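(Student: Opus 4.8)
The approach is to recognise each of $A_T$ and $A_R$ as an operator of the form $B^*B$, where $B$ is a bounded map from $[L^2(\mathbb{S}^2)]^6$ into a finite-dimensional Hilbert space. Any such $B^*B$ is automatically self-adjoint, $(B^*B)^*=B^*B^{**}=B^*B$, and non-negative, $\langle\Psi,B^*B\Psi\rangle=\|B\Psi\|^2\ge 0$, which is precisely what has to be shown. Concretely, I would introduce the maps $B_T\colon[L^2(\mathbb{S}^2)]^6\to\mathbb{C}^{M_T}$ and $B_R\colon[L^2(\mathbb{S}^2)]^6\to\mathbb{C}^{M_R}$ by $B_T\Psi:=\int_{\mathbb{S}^2}d\Omega\,a_T^*(\Omega)\Psi(\Omega)$ and $B_R\Psi:=\int_{\mathbb{S}^2}d\Omega\,a_R(\Omega)\Psi(\Omega)$.

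First I would check boundedness: since the entries $a_T^{im}$ and $a_R^{mj}$ are continuous on the compact sphere they are uniformly bounded, so Cauchy--Schwarz gives $\|B_{T/R}\Psi\|\le C\|\Psi\|$; equivalently, the $6\times 6$ integral kernels $a_T(\Omega)a_T^*(\Omega')$ and $a_R^*(\Omega)a_R(\Omega')$ are bounded on the finite-measure space $\mathbb{S}^2\times\mathbb{S}^2$, so that $A_T$ and $A_R$ are in fact Hilbert--Schmidt. Next I would compute the adjoints with respect to the standard inner product on $[L^2(\mathbb{S}^2)]^6$: pulling the conjugate transpose through the (real) angular integration one finds $(B_T^*v)(\Omega)=a_T(\Omega)v$ for $v\in\mathbb{C}^{M_T}$ and $(B_R^*w)(\Omega)=a_R^*(\Omega)w$ for $w\in\mathbb{C}^{M_R}$. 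Substituting these into the definitions \eqref{iulie11-4} gives exactly $A_T=B_T^*B_T$ and $A_R=B_R^*B_R$, and the two general identities above then finish the proof.

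There is no real obstacle here; the only thing that requires care is the bookkeeping of complex conjugation and transposition for the matrix-valued maps $a_{T/R}$ --- in particular keeping straight which angular variable occupies which slot of the kernel --- together with recording why the operators are bounded, which is just continuity on a compact manifold. One could equally phrase everything kernel-wise: writing $a_T(\Omega)a_T^*(\Omega')=\sum_{m=1}^{M_T}\phi_m(\Omega)\,\phi_m(\Omega')^*$, where $\phi_m\in[L^2(\mathbb{S}^2)]^6$ is the $m$-th column of $a_T$, exhibits $A_T=\sum_{m=1}^{M_T}|\phi_m\rangle\langle\phi_m|$ as a finite sum of non-negative rank-one operators, so that $\langle\Psi,A_T\Psi\rangle=\sum_{m=1}^{M_T}|\langle\phi_m,\Psi\rangle|^2\ge 0$, and symmetrically for $A_R$ using the rows of $a_R$.
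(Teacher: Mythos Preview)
Your proof is correct and is essentially the same argument as the paper's, just packaged more structurally: the paper computes $\langle\Psi,A_T\Phi\rangle$ component-wise and arrives at exactly your $\sum_{m=1}^{M_T}|\langle\phi_m,\Psi\rangle|^2=\|B_T\Psi\|^2$, whereas you identify the factorisation $A_{T/R}=B_{T/R}^*B_{T/R}$ up front and let the general identity do the work. The two presentations are interchangeable, and your alternative rank-one decomposition $A_T=\sum_m|\phi_m\rangle\langle\phi_m|$ is precisely what the paper's displayed computation unwinds to.
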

\begin{proof}
Choose $\Psi=\{\psi_i\}_{i=1}^6\in [L^2(\mathbb{S}^2)]^6$ and $\Phi=\{\phi_i\}_{i=1}^6\in [L^2(\mathbb{S}^2)]^6$ and compute:
\begin{align}\label{iulie11-55}
\langle\Psi,A_T\Phi\rangle&
=\int_{\mathbb{S}^2} d\Omega \sum_{i=1}^6\overline{\psi_i(\Omega)} \sum_{m=1}^{M_T}a_T^{im}(\Omega)
\sum_{j=1}^6\int_{\mathbb{S}^2} d\Omega'\; \overline{a_T^{jm}(\Omega')}\phi_{j}(\Omega')\nonumber \\
&=\sum_{m=1}^{M_T} \overline{\left\{\sum_{j=1}^6\int_{\mathbb{S}^2}
 d\Omega\; \overline{a_T^{jm}(\Omega)}\psi_j(\Omega)\right\} }\left\{\sum_{j=1}^6\int_{\mathbb{S}^2}
 d\Omega\; \overline{a_T^{jm}(\Omega)}\phi_j(\Omega)\right\}
\end{align}
and observe that $\langle\Psi,A_T\Phi\rangle= \overline{\langle\Phi,A_T\Psi\rangle}=\langle A_T\Psi,\Phi\rangle$. Moreover, 
\begin{align}\label{iulie11-5}
\langle\Psi,A_T\Psi\rangle=\sum_{m=1}^{M_T}\left |\sum_{j=1}^6\int_{\mathbb{S}^2}
 d\Omega\; \overline{a_T^{jm}(\Omega)}\psi_j(\Omega)\right|^2\geq 0.
\end{align}
 The proof for $A_R$ is similar. 
\end{proof}

\vspace{0.5cm}

Now let us consider $B=\mathbb{H}\mathbb{H}^*$ (the $M_R\times M_R$ 
matrix appearing in the capacity formula \eqref{capa}) and compute using \eqref{iulie11-1}:
\begin{align}\label{iulie11-3}
 &B=\mathbb{H}\mathbb{H}^*=\int_{\mathbb{S}^2\times \mathbb{S}^2} 
d\Omega_R d\Omega'_R\; a_R(\Omega_R)b(\Omega_R,\Omega'_R)a^*_R(\Omega'_R), \quad {\rm where}\nonumber \\
&b(\Omega_R,\Omega_R'):=\int_{\mathbb{S}^2\times \mathbb{S}^2} d\Omega_T d\Omega'_{T} s(\Omega_R,\Omega_T)\;a_T(\Omega_T)a_T^*(\Omega_T')
s^*(\Omega'_R,\Omega'_T).
\end{align}

We denote by $\mathcal{B}$ and $\mathcal{S}$ the integral operators in $B([L^2(\mathbb{S}^2)]^6)$ given by the 
matrix valued kernels $b(\Omega,\Omega')$ and $s(\Omega,\Omega')$ respectively. 
Note that the adjoint of $\mathcal{S}$ in  $B([L^2(\mathbb{S}^2)]^6)$, denoted by $\mathcal{S}^*$, 
will have an integral kernel $\mathcal{S}^*(\Omega,\Omega')=s^*(\Omega',\Omega)$. Thus from \eqref{iulie11-3} we have 
\begin{align}\label{iulie11-6}
\mathcal{B}=\mathcal{S}A_T \mathcal{S}^*.
\end{align}

\begin{lemma}\label{lemma2}
Let $\mathcal{K}:=\sqrt{A_T}\mathcal{S}^*A_R\mathcal{S}\sqrt{A_T}=
\left (\sqrt{A_R}\mathcal{S}\sqrt{A_T}\right)^*\left (\sqrt{A_R}\mathcal{S}\sqrt{A_T}\right)\geq 0$ be a 
bounded positive operator in $B([L^2(\mathbb{S}^2)]^6)$. Then for any integer $k\geq 2$ we have 
$$B^k=\int_{\mathbb{S}^2\times \mathbb{S}^2} d\Omega d\Omega'\; a_R(\Omega)b_k(\Omega,\Omega')a_R^*(\Omega'),$$
where $b_k(\Omega,\Omega')$ is the integral kernel of the operator 
$\mathcal{B}_k=\mathcal{S}\sqrt{A_T}\mathcal{K}^{k-1}\sqrt{A_T}\mathcal{S}^*$ in $B([L^2(\mathbb{S}^2)]^6)$.
\end{lemma}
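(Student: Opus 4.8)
The plan is to proceed by induction on $k$, driven by a single reduction identity: whenever a block $a_R^{*}(\Omega)$ and a block $a_R(\Omega')$ become adjacent and are integrated out, they reassemble into the operator $A_R$, because by \eqref{iulie11-4} the integral kernel of $A_R$ is exactly the $6\times 6$ matrix $a_R^{*}(\Omega)a_R(\Omega')$.

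First I would record the operator-algebra facts to be used throughout. By Lemma~\ref{lemma1} the operators $A_{T/R}$ are self-adjoint and non-negative, hence $\sqrt{A_{T/R}}$ are well defined, self-adjoint, non-negative, and $\sqrt{A_T}\sqrt{A_T}=A_T$; together with $(\sqrt{A_R}\,\mathcal{S}\,\sqrt{A_T})^{*}=\sqrt{A_T}\,\mathcal{S}^{*}\,\sqrt{A_R}$ this gives at once the asserted factorization
$$
\mathcal{K}=\sqrt{A_T}\,\mathcal{S}^{*}A_R\mathcal{S}\,\sqrt{A_T}=(\sqrt{A_R}\,\mathcal{S}\,\sqrt{A_T})^{*}(\sqrt{A_R}\,\mathcal{S}\,\sqrt{A_T})\ge 0 ,
$$
while boundedness of $\mathcal{K}$ follows from that of $\mathcal{S}$, $A_T$, $A_R$. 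Since $\mathcal{S}$ is Hilbert--Schmidt, so are $\mathcal{S}\sqrt{A_T}$ and $\sqrt{A_T}\mathcal{S}^{*}$, whence every $\mathcal{B}_k=\mathcal{S}\sqrt{A_T}\,\mathcal{K}^{k-1}\sqrt{A_T}\mathcal{S}^{*}$ is trace class and in particular possesses a genuine matrix-valued integral kernel $b_k(\Omega,\Omega')$; this is what gives the statement content.

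For the induction, the base case $k=1$ is just \eqref{iulie11-3}, since $\mathcal{B}_1=\mathcal{S}A_T\mathcal{S}^{*}=\mathcal{B}$. Assume now $B^{k}=\int a_R(\Omega)\,b_k(\Omega,\Omega')\,a_R^{*}(\Omega')\,d\Omega\,d\Omega'$ with $b_k$ the kernel of $\mathcal{B}_k$. Multiplying on the right by $B$ in the form \eqref{iulie11-3} and regrouping the factors gives
$$
B^{k+1}=\int a_R(\Omega_1)\,b_k(\Omega_1,\Omega_2)\big[a_R^{*}(\Omega_2)a_R(\Omega_3)\big]b(\Omega_3,\Omega_4)\,a_R^{*}(\Omega_4)\;d\Omega_1 d\Omega_2 d\Omega_3 d\Omega_4 .
$$
Performing the $\Omega_2,\Omega_3$ integrations first and recalling that $a_R^{*}(\Omega_2)a_R(\Omega_3)$ is the kernel of $A_R$, the inner double integral is exactly the kernel of the operator product $\mathcal{B}_kA_R\mathcal{B}$ at $(\Omega_1,\Omega_4)$. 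It then remains to telescope, using $\mathcal{B}=\mathcal{S}A_T\mathcal{S}^{*}=\mathcal{S}\sqrt{A_T}\cdot\sqrt{A_T}\mathcal{S}^{*}$ to absorb one extra factor $\mathcal{K}$:
$$
\mathcal{B}_kA_R\mathcal{B}=\mathcal{S}\sqrt{A_T}\,\mathcal{K}^{k-1}\big(\sqrt{A_T}\mathcal{S}^{*}A_R\mathcal{S}\sqrt{A_T}\big)\sqrt{A_T}\mathcal{S}^{*}=\mathcal{S}\sqrt{A_T}\,\mathcal{K}^{k}\sqrt{A_T}\mathcal{S}^{*}=\mathcal{B}_{k+1} .
$$
Hence $B^{k+1}=\int a_R(\Omega)\,b_{k+1}(\Omega,\Omega')\,a_R^{*}(\Omega')\,d\Omega\,d\Omega'$ with $b_{k+1}$ the kernel of $\mathcal{B}_{k+1}$, and the induction is complete; restricting to $k\ge 2$ yields the lemma.

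The one point that genuinely needs care — the main obstacle — is the bookkeeping that the integral kernel of a composition of integral operators is the iterated integral of their kernels, i.e. that the repeated reshuffling of $\mathbb{S}^2$-integrals above is legitimate rather than merely formal. Here I would invoke that each intermediate operator is Hilbert--Schmidt (from the Hilbert--Schmidt hypothesis on $\mathcal{S}$, boundedness of $A_{T/R}$, and continuity hence boundedness of the maps $a_{T/R}$, which makes the $a$-integrals absolutely convergent), so that Fubini applies and all the kernel identities hold as honest $L^2$ identities. Everything else is the algebraic telescoping displayed above.
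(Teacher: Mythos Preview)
Your proof is correct and follows essentially the same route as the paper: induction on $k$ together with the key observation that $a_R^{*}(\Omega)a_R(\Omega')$ is the integral kernel of $A_R$, so that the inner double integral in $B^{k+1}=B^kB$ becomes the kernel of $\mathcal{B}_kA_R\mathcal{B}=\mathcal{B}_{k+1}$. The only cosmetic differences are that the paper verifies $k=2$ explicitly as its base case while you (equivalently) start at $k=1$, and that you add the Fubini/Hilbert--Schmidt justification which the paper leaves implicit.
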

\begin{proof}
Let us first show that the identity holds for $k=2$. We have:
\begin{align}\label{iulie11-7}
 B^2
&=\int_{\mathbb{S}^2\times \mathbb{S}^2} d\Omega_1 d\Omega_4\;a_R(\Omega_1)
\left (\int_{\mathbb{S}^2\times \mathbb{S}^2} d\Omega_2 d\Omega_3\; b(\Omega_1,\Omega_2)a_R^*(\Omega_2)a_R(\Omega_3)b(\Omega_3,\Omega_4)
\right ) a_R^*(\Omega_4)\nonumber \\
 &=\int_{\mathbb{S}^2\times\mathbb{S}^2} d\Omega d\Omega'\;a_R(\Omega)b_2(\Omega,\Omega')a_R^*(\Omega')
\end{align}
where $b_2(\Omega_R,\Omega'_R)$ is the integral kernel of the operator 
$$\mathcal{B}_2= \mathcal{B}A_R \mathcal{B}=\mathcal{S}A_T \mathcal{S}^*A_R\mathcal{S}A_T \mathcal{S}^*=
\mathcal{S}\sqrt{A_T}\mathcal{K}\sqrt{A_T}\mathcal{S}^*
,$$ which proves the case $k=2$. For $k\geq 2$ we have 
\begin{align}\label{iulie11-8}
 B^{k+1}&=B^kB \nonumber \\
&=\int_{\mathbb{S}^2\times \mathbb{S}^2} d\Omega_1 d\Omega_4\;a_R(\Omega_1)
\left (\int_{\mathbb{S}^2\times \mathbb{S}^2} d\Omega_2 d\Omega_3\; b_k(\Omega_1,\Omega_2)a^*_R(\Omega_2)a_R(\Omega_3)b(\Omega_3,\Omega_4)
\right ) a_R^*(\Omega_4)\nonumber \\
 &=\int_{\mathbb{S}^2\times \mathbb{S}^2} d\Omega d\Omega'\;a_R(\Omega)b_{k+1}(\Omega,\Omega')a_R^*(\Omega')
\end{align}
where $b_k(\Omega,\Omega')$ is the integral kernel of the operator 
$\mathcal{B}_k$, and $b_{k+1}(\Omega,\Omega')$ is the integral kernel of

$$\mathcal{B}_{k+1}=\mathcal{B}_kA_R\mathcal{B}=
(\mathcal{S}\sqrt{A_T}\mathcal{K}^{k-1}\sqrt{A_T}\mathcal{S}^*)
A_R(\mathcal{S}A_T \mathcal{S}^*)=\mathcal{S}\sqrt{A_T}\mathcal{K}^k\sqrt{A_T}\mathcal{S}^*.$$ 
The proof is over.
\end{proof}

\vspace{0.5cm}

\begin{lemma}\label{lemma3}
 Let $z$ belong to the intersection of the resolvent sets of $B$ and $K$, i.e. $z\in \rho(B)\cap \rho(\mathcal{K})$. Then $\mathcal{K}$ is trace class and we have the identity:
\begin{align}\label{iulie11-9}
{\rm Tr}_{\mathbb{C}^{M_R}}\left \{(z-B)^{-1}B\right \}={\rm Tr}_{[L^2(\mathbb{S}^2)]^6}\left \{(z-\mathcal{K})^{-1}\mathcal{K}\right \}.
\end{align}
\end{lemma}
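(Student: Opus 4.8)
The plan is to reduce the asserted identity to a matching of all the moments $\Tr B^{k}$ and $\Tr\mathcal{K}^{k}$, $k\ge1$, and then to recover the statement by a Neumann-series argument followed by analytic continuation. First I would record that $\mathcal{K}$ is trace class: since the spread function $s$ is a Hilbert--Schmidt kernel, $\mathcal{S}$ is a Hilbert--Schmidt operator on $[L^{2}(\mathbb{S}^{2})]^{6}$, and because $A_{T/R}$ are bounded (Lemma \ref{lemma1}) while the Hilbert--Schmidt class is a two-sided ideal, $\sqrt{A_{R}}\,\mathcal{S}\,\sqrt{A_{T}}$ is again Hilbert--Schmidt; hence $\mathcal{K}=(\sqrt{A_{R}}\mathcal{S}\sqrt{A_{T}})^{*}(\sqrt{A_{R}}\mathcal{S}\sqrt{A_{T}})$ is a product of two Hilbert--Schmidt operators and therefore trace class. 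In particular $(z-\mathcal{K})^{-1}\mathcal{K}$ is trace class for every $z\in\rho(\mathcal{K})$ and $z\mapsto\Tr_{[L^{2}(\mathbb{S}^{2})]^{6}}\{(z-\mathcal{K})^{-1}\mathcal{K}\}$ is analytic there, while $z\mapsto\Tr_{\mathbb{C}^{M_{R}}}\{(z-B)^{-1}B\}$ is a rational function of $z$, analytic on $\rho(B)$.

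The heart of the matter is the moment identity $\Tr_{\mathbb{C}^{M_{R}}}B^{k}=\Tr_{[L^{2}(\mathbb{S}^{2})]^{6}}\mathcal{K}^{k}$ for all $k\ge1$. Starting from $B^{k}=\int_{\mathbb{S}^{2}\times\mathbb{S}^{2}}a_{R}(\Omega)\,b_{k}(\Omega,\Omega')\,a_{R}^{*}(\Omega')\,d\Omega\,d\Omega'$ (Lemma \ref{lemma2} for $k\ge2$, and \eqref{iulie11-3} with $\mathcal{B}_{1}=\mathcal{B}$ for $k=1$), I would take the trace over $\mathbb{C}^{M_{R}}$, pull it inside the integral, and use cyclicity of the finite-dimensional trace to get $\Tr_{\mathbb{C}^{M_{R}}}B^{k}=\int_{\mathbb{S}^{2}\times\mathbb{S}^{2}}\Tr_{\mathbb{C}^{6}}[\,b_{k}(\Omega,\Omega')\,a_{R}^{*}(\Omega')\,a_{R}(\Omega)\,]\,d\Omega\,d\Omega'$. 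Since \eqref{iulie11-4} shows that the $6\times6$ matrix $a_{R}^{*}(\Omega')a_{R}(\Omega)$ is precisely the integral kernel of $A_{R}$ (with its two arguments swapped), the right-hand side equals $\Tr_{[L^{2}(\mathbb{S}^{2})]^{6}}(\mathcal{B}_{k}A_{R})$, which is trace class, being a product of bounded operators that includes the two Hilbert--Schmidt factors $\mathcal{S}$ and $\mathcal{S}^{*}$. Now insert $\mathcal{B}_{k}=\mathcal{S}\sqrt{A_{T}}\mathcal{K}^{k-1}\sqrt{A_{T}}\mathcal{S}^{*}$ from Lemma \ref{lemma2} and use cyclicity of the trace on $[L^{2}(\mathbb{S}^{2})]^{6}$ to move the block $\sqrt{A_{T}}\mathcal{S}^{*}A_{R}\mathcal{S}\sqrt{A_{T}}=\mathcal{K}$ next to $\mathcal{K}^{k-1}$; this gives $\Tr(\mathcal{B}_{k}A_{R})=\Tr(\mathcal{K}^{k-1}\mathcal{K})=\Tr\mathcal{K}^{k}$. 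The permutation of factors is legitimate because for $k\ge2$ the factor $\mathcal{K}^{k-1}$ is already trace class, and for $k=1$ it merely exchanges the two Hilbert--Schmidt operators $\mathcal{S}\sqrt{A_{T}}$ and $\sqrt{A_{T}}\mathcal{S}^{*}A_{R}$.

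With the moment identity at hand, for $|z|>\max(\|B\|,\|\mathcal{K}\|)$ I would expand $(z-B)^{-1}B=\sum_{k\ge1}z^{-k}B^{k}$ and $(z-\mathcal{K})^{-1}\mathcal{K}=\sum_{k\ge1}z^{-k}\mathcal{K}^{k}$ (the latter converging in trace norm, as $\|\mathcal{K}^{k}\|_{1}\le\|\mathcal{K}\|^{k-1}\|\mathcal{K}\|_{1}$), take traces term by term, and conclude that the two analytic functions from the first paragraph coincide on that region. Finally, $B$ and $\mathcal{K}$ being self-adjoint and non-negative (Lemmas \ref{lemma1} and \ref{lemma2}) with $\mathcal{K}$ compact, $\sigma(B)\cup\sigma(\mathcal{K})$ is a compact subset of $[0,\infty)$, so $\rho(B)\cap\rho(\mathcal{K})=\mathbb{C}\setminus(\sigma(B)\cup\sigma(\mathcal{K}))$ is connected, and the identity theorem propagates the equality to all of $\rho(B)\cap\rho(\mathcal{K})$. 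The step I expect to be the main obstacle is the moment identity: keeping the two different traces and the $6\times6$ matrix-valued kernels straight, recognizing $a_{R}^{*}(\Omega')a_{R}(\Omega)$ as the kernel of $A_{R}$, and checking at each use of cyclicity which operators are trace class and which are merely Hilbert--Schmidt. Conceptually the identity reflects that $B$ and $\mathcal{K}$ have the same non-zero spectrum with the same multiplicities, both arising from a single Hilbert--Schmidt operator $T$ via $TT^{*}$ and $T^{*}T$.
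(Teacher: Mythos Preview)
Your proposal is correct and follows essentially the same route as the paper: show $\mathcal{K}$ is trace class as a product of two Hilbert--Schmidt operators, establish the moment identity $\Tr_{\mathbb{C}^{M_R}}B^{k}=\Tr_{[L^{2}(\mathbb{S}^{2})]^{6}}\mathcal{K}^{k}$ via Lemma~\ref{lemma2} and trace cyclicity, sum the Neumann series for large $|z|$, and then analytically continue. You are in fact more careful than the paper in spelling out why $a_{R}^{*}(\Omega')a_{R}(\Omega)$ is the kernel of $A_{R}$, in justifying each use of cyclicity by identifying a trace-class factor, and in noting that $\rho(B)\cap\rho(\mathcal{K})$ is connected so that the identity theorem applies.
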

\begin{proof}
 The operator $\mathcal{K}$ is a product of two Hilbert-Schmidt operators, thus it is trace class. Since both sides of \eqref{iulie11-9} 
are analytic on $\rho(B)\cap \rho(\mathcal{K})$, it is enough to prove the equality for $|z|>\max\{||B||, ||\mathcal{K}||\}$. Using the 
power series expansion we can write:
$${\rm Tr}_{\mathbb{C}^{M_R}}\left ((z-B)^{-1}B\right )=
\sum_{n\geq 1}\frac{1}{z^{n}}{\rm Tr}_{\mathbb{C}^{M_R}}B^{n}.$$
Then, using Lemma \ref{lemma2} and the trace cyclicity we get: 
 \begin{align}\label{iulie11-10}
\sum_{n\geq 1}\frac{1}{z^{n}}{\rm Tr}_{\mathbb{C}^{M_R}}B^{n}&=\sum_{n\geq 1}\frac{1}{z^{n}}{\rm Tr}_{[L^2(\mathbb{S}^2)]^6}
\{\mathcal{B}_{n}A_R\}\nonumber \\
&=\sum_{n\geq 1}\frac{1}{z^{n}}{\rm Tr}_{[L^2(\mathbb{S}^2)]^6}
\{\mathcal{S}\sqrt{A_T}\mathcal{K}^{n-1}\sqrt{A_T}\mathcal{S}^*A_R\}=
\sum_{n\geq 1}\frac{1}{z^{n}}{\rm Tr}_{[L^2(\mathbb{S}^2)]^6}\left \{\mathcal{K}^{n}\right\}\nonumber \\
&={\rm Tr}_{[L^2(\mathbb{S}^2)]^6}\left \{(z-\mathcal{K})^{-1}\mathcal{K}\right \},
\end{align} 
\end{proof}

\vspace{0.5cm}

\begin{proposition}\label{cor1}
 The capacity of our system can be written as:
$$C(M_T,M_R)=\frac{1}{\rm ln(2)} {\rm Tr}_{[L^2(\mathbb{S}^2)]^6}{\rm Ln} \left (1+\frac{E_T}{ M_T N_0}\mathcal{K}\right ),$$
where $\mathcal{K}=\sqrt{A_T}\mathcal{S}^*A_R\mathcal{S}\sqrt{A_T}$, while 
${\rm Ln}$ denotes the principal branch of the natural logarithm.
\end{proposition}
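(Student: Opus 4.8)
Set $\alpha:=\frac{E_T}{M_T N_0}>0$. Since $B=\mathbb{H}\mathbb{H}^*\geq 0$ is a finite matrix, one has
$C(M_T,M_R)=\log_2{\rm det}(\mathbb{I}_{M_R}+\alpha B)=\frac{1}{{\rm ln}(2)}{\rm Tr}_{\mathbb{C}^{M_R}}{\rm Ln}(\mathbb{I}_{M_R}+\alpha B)$,
so the whole statement reduces to the operator identity
$$
{\rm Tr}_{\mathbb{C}^{M_R}}{\rm Ln}(\mathbb{I}_{M_R}+\alpha B)={\rm Tr}_{[L^2(\mathbb{S}^2)]^6}{\rm Ln}(\mathbb{I}+\alpha\mathcal{K}).
$$
First I would check that the right-hand side is meaningful: by Lemma \ref{lemma3} the operator $\mathcal{K}$ is positive and trace class, so $\mathbb{I}+\alpha\mathcal{K}\geq\mathbb{I}$, the principal-branch logarithm coincides with the functional-calculus logarithm, and $0\leq{\rm Ln}(\mathbb{I}+\alpha\mathcal{K})\leq\alpha\mathcal{K}$, which is again trace class.

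The plan is to regard both traces as functions of a coupling parameter and to differentiate. For $t\geq 0$ put $f(t):={\rm Tr}_{\mathbb{C}^{M_R}}{\rm Ln}(\mathbb{I}_{M_R}+tB)$ and $g(t):={\rm Tr}_{[L^2(\mathbb{S}^2)]^6}{\rm Ln}(\mathbb{I}+t\mathcal{K})$. Both vanish at $t=0$, and I expect both to be continuously differentiable on $[0,\infty)$ with
$$
f'(t)={\rm Tr}_{\mathbb{C}^{M_R}}\big[(\mathbb{I}_{M_R}+tB)^{-1}B\big],\qquad g'(t)={\rm Tr}_{[L^2(\mathbb{S}^2)]^6}\big[(\mathbb{I}+t\mathcal{K})^{-1}\mathcal{K}\big].
$$
For $f$ this is elementary finite-dimensional calculus; for $g$ I would diagonalize $\mathcal{K}=\sum_n\lambda_n\langle e_n,\cdot\rangle e_n$ with $\lambda_n\geq 0$, $\sum_n\lambda_n<\infty$, write $g(t)=\sum_n{\rm ln}(1+t\lambda_n)$, and differentiate term by term, the interchange being justified on compact $t$-intervals by the bound $\lambda_n/(1+t\lambda_n)\leq\lambda_n$. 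Next, for fixed $t>0$ set $z:=-1/t$; since $B\geq 0$ and $\mathcal{K}\geq 0$ we have $z\in\rho(B)\cap\rho(\mathcal{K})$, and $(\mathbb{I}_{M_R}+tB)^{-1}B=-t^{-1}(z-B)^{-1}B$, and similarly for $\mathcal{K}$. Then Lemma \ref{lemma3} gives at once
$f'(t)=-t^{-1}{\rm Tr}_{\mathbb{C}^{M_R}}[(z-B)^{-1}B]=-t^{-1}{\rm Tr}_{[L^2(\mathbb{S}^2)]^6}[(z-\mathcal{K})^{-1}\mathcal{K}]=g'(t)$.
Since $f(0)=g(0)=0$, integrating from $0$ to $\alpha$ yields $f(\alpha)=g(\alpha)$, and dividing by ${\rm ln}(2)$ gives the claim.

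The main technical point — the only place where anything beyond bookkeeping enters — is the justification of the formula for $g'(t)$ and, more generally, the manipulation of traces of operators on the infinite-dimensional space $[L^2(\mathbb{S}^2)]^6$; this is entirely controlled by the fact, supplied by Lemma \ref{lemma3}, that $\mathcal{K}$ is positive and trace class. As a cross-check (and an alternative argument bypassing the differentiation) I would note that the computation carried out inside the proof of Lemma \ref{lemma3}, namely Lemma \ref{lemma2} together with trace cyclicity, actually shows ${\rm Tr}_{\mathbb{C}^{M_R}}B^{n}={\rm Tr}_{[L^2(\mathbb{S}^2)]^6}\mathcal{K}^{n}$ for every integer $n\geq 1$. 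By the standard identity ${\rm det}(\mathbb{I}+tT)=\exp\!\big(\sum_{n\geq 1}\tfrac{(-1)^{n+1}}{n}t^{n}{\rm Tr}\,T^{n}\big)$ (valid for small $|t|$, and for $T=\mathcal{K}$ because it is trace class), the entire functions $t\mapsto{\rm det}(\mathbb{I}_{M_R}+tB)$ and $t\mapsto{\rm det}(\mathbb{I}+t\mathcal{K})$ then coincide for small $|t|$, hence everywhere; evaluating at $t=\alpha$ and taking $\log_2$ gives the statement directly.
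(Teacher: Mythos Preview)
Your argument is correct and rests on the same key identity as the paper's, namely Lemma~\ref{lemma3}, but you reach the conclusion by a different route. The paper represents ${\rm Ln}(1+\alpha B)$ via the Dunford--Riesz contour integral, rewrites the integrand as $g(z)(z-B)^{-1}B$ with $g(z)=z^{-1}{\rm Ln}(1+\alpha z)$, takes the trace, applies Lemma~\ref{lemma3} pointwise in $z$ along the contour, and then recognizes the resulting contour integral as ${\rm Tr}\,{\rm Ln}(1+\alpha\mathcal{K})$. You instead differentiate in a real coupling parameter $t$, land on the same resolvent trace ${\rm Tr}[(z-B)^{-1}B]$ with $z=-1/t$, invoke Lemma~\ref{lemma3} at that single real point, and integrate back along $[0,\alpha]$. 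Your approach is slightly more elementary in that it avoids complex contour integration and the analyticity of $g$, relying only on term-by-term differentiation of the eigenvalue series for $\mathcal{K}$; the paper's version has the advantage of making the holomorphic functional calculus explicit, which is reused later in Lemma~\ref{lemmasept}. Your alternative argument via ${\rm Tr}\,B^n={\rm Tr}\,\mathcal{K}^n$ and equality of Fredholm determinants as entire functions of $t$ is also valid and perhaps the cleanest of the three.
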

\begin{proof}
The Shannon-Foschini capacity is equal to  
$$C(M_T,M_R)=\frac{1}{\rm ln(2)} {\rm Tr}_{\mathbb{C}^{M_R}}
{\rm Ln}\left (1+\frac{E_T}{N_0 M_T}B\right ).$$
Since the spectra of  $\frac{E_T}{N_0 M_T}B$ and $\frac{E_T}{N_0 M_T}\mathcal{K}$ 
are both included in some large enough closed interval of the type $[0,L]$, we can find a positively 
oriented, simple and closed contour $\Gamma$ which contains $[0,L]$, while 
$\Gamma\subset \mathbb{C}\setminus (-\infty,-N_0 M_T/E_T]$. By the Dunford-Riesz functional calculus we may write:
$$
{\rm Ln}\left (1+\frac{E_T}{N_0 M_T}B\right )=\frac{1}{2\pi i} \int_\Gamma dz \;
{\rm Ln}\left (1+\frac{E_T}{N_0 M_T}z\right )(z-B)^{-1}$$
Since the function $g(z):=\frac{1}{z}{\rm Ln}\left (1+\frac{E_T}{N_0 M_T}z\right )$ is also holomorphic on 
$\mathbb{C}\setminus (-\infty,-N_0 M_T/E_T]$, we may write:
$$
{\rm Ln}\left (1+\frac{E_T}{N_0 M_T}B\right )=\frac{1}{2\pi i} \int_\Gamma dz \; 
g(z)(z-B)^{-1}z=\frac{1}{2\pi i} \int_\Gamma dz \; g(z)(z-B)^{-1}B.$$
Thus (using \eqref{iulie11-9}):

$$C=\frac{1}{\rm ln(2)} \frac{1}{2\pi i} \int_\Gamma dz \; g(z)\;{\rm Tr}_{\mathbb{C}^{M_R}}\left\{(z-B)^{-1}B\right\}=
\frac{1}{\rm ln(2)} \frac{1}{2\pi i} \int_\Gamma dz \; g(z)\;{\rm Tr}_{[L^2(\mathbb{S}^2)]^6}
\left \{(z-\mathcal{K})^{-1}\mathcal{K}\right \}$$
which can be integrated back and we obtain the result.
\end{proof}

\vspace{0.5cm}

\section{Proof of Theorem \ref{theo}}\label{section2}
We start with an abstract technical lemma which will be used extensively during the proof. 
\begin{lemma}\label{lemmasept}
Let $F(x)={\rm Ln}(1+x)$ be defined on $\mathbb{C}\setminus (-\infty,-1]$. Let $T_1$ and $T_2$ be two self-adjoint non-negative trace class 
operators defined on a separable Hilbert space $\mathcal{H}$. Then 
$F(T_{1,2})$ are trace class and moreover, if $\Delta T:=T_2-T_1$ we have:
\begin{align}\label{septemb1}
&|{\rm Tr} \{F(T_2)\}-{\rm Tr} \{F(T_1)\}|\leq ||\Delta T||_1,\quad {\rm and}\\
\label{septemb2}
&{\rm Tr} \{F(T_2)\}\leq {\rm Tr} \{F(T_1)\}+{\rm Tr} \{F'(T_1)\Delta T\}.
\end{align}
\end{lemma}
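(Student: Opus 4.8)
The plan is to establish both inequalities through the integral representation of $F$ together with standard trace-ideal estimates. First I would record the elementary fact that for $T_1,T_2$ self-adjoint, non-negative and trace class, the operators $F(T_{1,2})={\rm Ln}(1+T_{1,2})$ are again trace class; this follows because $0\le F(x)\le x$ on $[0,\infty)$, so $\|F(T_j)\|_1\le \|T_j\|_1<\infty$ by the operator monotonicity/concavity of the logarithm (or, more elementarily, by comparing eigenvalues $\sum_n F(\mu_n)\le\sum_n\mu_n$). The same bound shows $F'(T_1)\Delta T$ is trace class, since $F'(T_1)=(1+T_1)^{-1}$ is bounded with norm $\le 1$ and $\Delta T$ is trace class.

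For \eqref{septemb1}, the approach is to write $T_t:=T_1+t\,\Delta T$ for $t\in[0,1]$, note that each $T_t$ is self-adjoint (not necessarily non-negative, but its spectrum stays in $(-1,\infty)$ since $T_t\ge T_1-\|\Delta T\|\ge\dots$ — actually one must be slightly careful here; the clean route is to observe $\sigma(T_t)\subset[\min(0,-\|\Delta T\|),\infty)$, which for the eventual application with $\Delta T$ small causes no trouble, and in general one reduces to that case or uses that $\|F'\|_\infty$ on the relevant spectral region is finite). Then by the fundamental theorem of calculus applied to $t\mapsto {\rm Tr}\{F(T_t)\}$, whose derivative is ${\rm Tr}\{F'(T_t)\Delta T\}$ (differentiability of the trace of an analytic function of a trace-class-perturbed operator is standard, justified via the Dunford-Riesz contour integral as in Proposition \ref{cor1}), we get
$${\rm Tr}\{F(T_2)\}-{\rm Tr}\{F(T_1)\}=\int_0^1 {\rm Tr}\{F'(T_t)\Delta T\}\,dt,$$
and hence $|{\rm Tr}\{F(T_2)\}-{\rm Tr}\{F(T_1)\}|\le\int_0^1\|F'(T_t)\|\,\|\Delta T\|_1\,dt\le\|\Delta T\|_1$ since $\|(1+T_t)^{-1}\|\le 1$ when $T_t\ge 0$; the general case follows by a limiting/approximation argument or by noting the intended application always has $T_1,T_2\ge 0$ with $T_t\ge 0$ along the segment.

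For \eqref{septemb2}, the key point is concavity: the scalar function $F(x)={\rm Ln}(1+x)$ is concave on $(-1,\infty)$, so $F(T_2)\le F(T_1)+F'(T_1)\,\Delta T$ would be the operator-inequality analogue — but operator concavity must be invoked with care, and in fact the cleanest argument avoids operator concavity and instead uses the integral formula above combined with the monotonicity $t\mapsto {\rm Tr}\{F'(T_t)\Delta T\}$... which itself is not obviously monotone. The main obstacle is precisely this step: I expect the honest route is to use the second-order Taylor formula with integral remainder,
$${\rm Tr}\{F(T_2)\}={\rm Tr}\{F(T_1)\}+{\rm Tr}\{F'(T_1)\Delta T\}+\int_0^1(1-t)\,\frac{d^2}{dt^2}{\rm Tr}\{F(T_t)\}\,dt,$$
and then show the remainder is non-positive because $F''(x)=-(1+x)^{-2}<0$ makes $\frac{d^2}{dt^2}{\rm Tr}\{F(T_t)\}={\rm Tr}\{F''(T_t)(\Delta T)^2\}\le 0$ — here $F''(T_t)=-(1+T_t)^{-2}\le 0$ and $(\Delta T)^2\ge 0$, so the trace of the product is $\le 0$ (the trace of a product of a non-positive operator and a non-negative trace-class operator is non-positive, which one sees by cyclicity writing it as $-\|(1+T_t)^{-1}|\Delta T|^{1/2}\cdots\|$-type expression, i.e. ${\rm Tr}\{F''(T_t)(\Delta T)^2\}=-{\rm Tr}\{((1+T_t)^{-1}\Delta T)((1+T_t)^{-1}\Delta T)^*\}\le 0$ using self-adjointness of $\Delta T$ and $(1+T_t)^{-1}$). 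Carrying out the two differentiations of ${\rm Tr}\{F(T_t)\}$ rigorously — interchanging derivative and trace, and identifying the derivatives via the resolvent/Cauchy-integral representation — is the technical heart; everything else is bookkeeping with the trace norm.
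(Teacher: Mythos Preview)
Your overall strategy coincides with the paper's: interpolate $T_t=(1-t)T_1+tT_2$, differentiate $\phi(t)={\rm Tr}\{F(T_t)\}$ once to get \eqref{septemb1}, and use the second-order Taylor remainder together with $\phi''\le 0$ for \eqref{septemb2}. One small point you worry about unnecessarily: since $T_t$ is a convex combination of $T_1\ge 0$ and $T_2\ge 0$, one has $T_t\ge 0$ for every $t\in[0,1]$ automatically, so $\|(1+T_t)^{-1}\|\le 1$ holds without any approximation argument.

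There is, however, a genuine error in your computation of the second derivative. The identity $\phi''(t)={\rm Tr}\{F''(T_t)(\Delta T)^2\}$ is \emph{false} when $T_t$ and $\Delta T$ do not commute. Differentiating $\phi'(t)={\rm Tr}\{(1+T_t)^{-1}\Delta T\}$ and using $\frac{d}{dt}(1+T_t)^{-1}=-(1+T_t)^{-1}\Delta T(1+T_t)^{-1}$ gives instead
\[
\phi''(t)=-{\rm Tr}\{(1+T_t)^{-1}\Delta T(1+T_t)^{-1}\Delta T\},
\]
i.e.\ $-{\rm Tr}\{RDRD\}$ with $R=(1+T_t)^{-1}$ and $D=\Delta T$, and in general ${\rm Tr}\{RDRD\}\neq{\rm Tr}\{R^2D^2\}$. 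Fortunately the correct expression is still non-positive: writing $S:=\sqrt{R}\,D\,\sqrt{R}$ (self-adjoint, Hilbert--Schmidt) and using cyclicity, ${\rm Tr}\{RDRD\}={\rm Tr}\{S^2\}=\|S\|_2^2\ge 0$. The paper reaches the same conclusion by a different, more general device: it represents $\phi''$ via a contour integral of $F'$, expands in the eigenbasis of $T_t$, and identifies the coefficients as divided differences $a_{jk}=\frac{F'(E_j)-F'(E_k)}{E_j-E_k}$ (or $F''(E_j)$ on the diagonal), all non-positive by concavity of $F$. That argument would survive replacing ${\rm Ln}(1+x)$ by any concave analytic $F$; your intended route works only because for this particular $F$ the resolvent derivative happens to factor nicely.
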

\begin{proof}
 If $0\leq t\leq 1$ we define $T(t):=(1-t)T_1+tT_2$. Clearly, $T(t)\geq 0$ and
$T'(t)=\Delta T\in B_1(\mathcal{H})$ (the space of trace class operators). 
Let $\Gamma$ a positively oriented
simple contour contained in the analyticity domain of $F$, and
surrounding the interval
$[0,\max\{\sigma(T_1),\sigma(T_2)\}]$. Then $\Gamma$
completely contains the spectrum of $T(t)$ for all $t$ and we can
write (in the sense of bounded operators): 
\begin{align}\label{septemb3}
F(T(t))=\frac{1}{2\pi i}\int_\Gamma dz \; F(z)
(z-T(t))^{-1}=\frac{1}{2\pi i}\int_\Gamma dz \; F(z)
\{(z-T(t))^{-1}-1/z\}.
\end{align}
The second formula holds true because $F(z)/z$ is still analytic inside the domain of integration (see also the 
argument used in Proposition \ref{cor1}). But $(z-T(t))^{-1}-1/z$ is now a trace class operator and it follows that 
$F(T(t))\in B_1(\mathcal{H})$.  

Define the function 
$\phi:[0,1]\to\R$ given by: 
\begin{equation}\label{iulie2}
\phi(t):={\rm Tr}
\{F(T(t))\}=\frac{1}{2\pi i}\int_\Gamma F(z){\rm Tr}
\{(z-T(t))^{-1}-1/z\}dz.
\end{equation}
 Let us first compute $\phi'(t)$. Using the cyclicity of the trace we have:
\begin{align}\label{iulie3}
\phi'(t)&=\frac{1}{2\pi i}\int_\Gamma F(z){\rm Tr}
\{(z-T(t))^{-1}T'(t)(z-T(t))^{-1}\}dz\nonumber \\
&=\frac{1}{2\pi i}\int_\Gamma F(z){\rm Tr}
\{(z-T(t))^{-2}T'(t)\}dz.
\end{align}
Denote by $\{|f_j(t)\rangle\}_{j\geq 1}$ the orthonormal basis consisting of eigenvectors of $T(t)$, 
corresponding to the non-negative eigenvalues $\{E_j(t)\}_{j\geq 1}$
counting multiplicities and arranged in decreasing order. Then using $T'(t)=\Delta T$ we obtain: 
\begin{align}
\phi'(t)&=\sum_{j\geq 1}\langle f_j(t), (\Delta T) f_j(t)\rangle
\frac{1}{2\pi i}\int_\Gamma F(z)(z-E_j(t))^{-2} dz\nonumber \\
&=\sum_{j\geq 1}\langle f_j(t), (\Delta T) f_j(t)\rangle F'(E_j(t))={\rm Tr}\{F'(T(t))\Delta T\}.
\end{align}
Thus $\phi'(t)={\rm Tr}\{(1+T(t))^{-1}\Delta T\}$, which means that $|\phi'(t)|\leq ||\Delta T||_1$ and this proves 
\eqref{septemb1}. 

We now prove \eqref{septemb2}. Note the identity: 
$$\phi(1)=\phi(0)+\phi'(0)+\int_0^1dt\; (1-t)\phi''(t).$$
Clearly, $\phi'(0)={\rm Tr}\{F'(T_1)\Delta T\}$; thus the only remaining thing is to show that $\phi''(t)\leq 0$ for all $t$. 
By differentiating once again in \eqref{iulie3} we have:
\begin{align}\label{iulie4}
\phi''(t)
&=\frac{1}{2\pi i}\int_\Gamma F'(z){\rm Tr}
\{(z-T(t))^{-1}\Delta T (z-T(t))^{-1}\Delta T\}dz.
\end{align}
Using the eigenvalues and eigenvectors of $T(t)$ we get:
\begin{align}\label{iulie5}
\phi''(t)
&=\frac{1}{2\pi i}\sum_{j\geq 1} \sum_{k\geq 1}|\langle f_j(t), \Delta T f_k(t)\rangle |^2\int_\Gamma F'(z)
(z-E_j(t))^{-1}(z-E_k(t))^{-1}dz.
\end{align}
Define $\{a_{jk}\}_{j,k\geq 1}$, where $a_{jk}(t):=F''(E_j(t))$
if $E_j(t)=E_k(t)$, and
$a_{jk}(t)=\frac{F'(E_j(t))-F'(E_k(t))}{E_j(t)-E_k(t)}$ if $E_j(t)\neq
E_k(t)$. Because $F$ is concave on $(0,\infty)$, all $a_{jk}$'s are non-positive. By the residue calculus we have:
\begin{align}\label{iulie6}
\phi''(t)
&=\sum_{j\geq 1} \sum_{k\geq 1}|\langle f_j(t), \Delta T  f_k(t)\rangle|^2a_{jk}(t)\leq 0,\quad 0<t<1,
\end{align}
thus the proof of \eqref{septemb2} is over.
\end{proof}

\vspace{0.5cm}

\noindent {\bf Remark}. Assume that $P=P^*=P^2$ is an orthogonal projection, and denote by $Q=1-P$. Define 
$T_1=PT_2P+QT_2Q$ to be the 'diagonal' part of $T_2$ with respect to the decomposition $P+Q=1$. Then $T_2-T_1$ is off-diagonal 
and ${\rm Tr}\{F'(T_1) \Delta T\}=0$. Then \eqref{septemb2} implies 
$$  {\rm Tr}\{F(T_2)\}\leq {\rm Tr}\{F(T_1)\}= {\rm Tr}\{F(PT_2P)\}+{\rm Tr}\{F(QT_2Q)\}$$
which is a variant of Berezin's inequality.

\subsection{Proof of (i)}
The operators $A_{T/R}$ (see \eqref{iulie11-4}) have each a $6\times 6$ matrix valued integral kernel $A_{T/R}(\Omega,\Omega')$ 
 with the following structure:
\begin{align}\label{iulie11-11}
&A_{T}^{ij}(\Omega,\Omega')=\sum_{m=1}^{M_{T}}a_{T}^{im}(\Omega)\overline{a_{T}^{jm}(\Omega')}, \nonumber \\
&A_{R}^{ij}(\Omega,\Omega')=\sum_{m=1}^{M_{R}}\overline{a_{R}^{mi}(\Omega)}a_{R}^{jm}(\Omega'),\quad i,j\in\{1,2,\dots,6\}.
\end{align}
A consequence of \eqref{iulie11-12} is the following estimate (in the sense 
of bounded operators generated by the corresponding integral kernels): 
\begin{equation}\label{septemb5}
||A_T/M_T-\mathcal{A}_T||=\mathcal{O}(M_T^{-1}),\quad ||\sqrt{A_T/M_T}-\sqrt{\mathcal{A}_T}||={o}(1).
\end{equation}
Note that in general we cannot say more about the speed of convergence of the square root. 

Remember from Lemma \ref{lemma2} that $\mathcal{K}=
\sqrt{A_T} \mathcal{S}^*A_R \mathcal{S} \sqrt{A_T}$. 
Identify $E_T\mathcal{K}/(N_0 M_T)$ with $T_1$, and $E_T\sqrt{\mathcal{A}_T} \mathcal{S}^*A_R \mathcal{S} 
\sqrt{\mathcal{A}_T}/N_0$ with $T_2$. Since $\mathcal{S}$ is assumed to be Hilbert-Schmidt, we have that $T_2-T_1$ is trace class
with a trace norm which goes to zero with $M_T$. Thus \eqref{septemb1} implies:
\begin{equation}\label{septemb6}
\lim_{M_T\to\infty}C(M_T,M_R)=\frac{1}{\rm ln(2)} {\rm Tr}_{[L^2(\mathbb{S}^2)]^6}{\rm ln} \left (1+\frac{E_T}{ N_0}\sqrt{\mathcal{A}_T} \mathcal{S}^*A_R \mathcal{S} 
\sqrt{\mathcal{A}_T}\right )
\end{equation}
and Theorem \ref{theo} {\it (i)} is proved.

\subsection{Proof of (ii)}

Let us 
introduce the operator 
$$\mathcal{D}_R:=\frac{E_T}{M_T N_0}\sqrt{A_T}\mathcal{S}^*\mathcal{A}_{R}\mathcal{S}\sqrt{A_{T}}.$$ 
Another consequence of \eqref{iulie11-12} and \eqref{iulie11-11} is (again as bounded operators):
\begin{equation}\label{septemb7}
\sup_{M_R\geq 1}||A_R-M_R\mathcal{A}_R||<\infty.
\end{equation}
Identify $E_T\mathcal{K}/(N_0 M_T)$ with $T_1$, and $M_R \mathcal{D}_R$ with $T_2$. Then $||T_2-T_1||_1$ is uniformly bounded in 
$M_R$, thus the estimate \eqref{septemb1} implies:
\begin{equation}\label{septemb8}
\sup_{M_R\geq 1}\left |C(M_T,M_R)-\frac{1}{\rm ln(2)} {\rm Tr}_{[L^2(\mathbb{S}^2)]^6}{\rm ln} 
\left (1+M_R \mathcal{D}_R\right )\right |<\infty.
\end{equation}
 But now $\mathcal{D}_R$ is a rank $M_T$, non-negative operator, and it is non-zero only on the range of $A_T$. Assume that 
$\mathcal{D}_R$ has exactly $d$ positive eigenvalues, denoted by $\{\lambda_j\}_{j=1}^d$, including multiplicities. Then 
$$\lim_{M_R\to\infty}\frac{1}{\ln(M_R)}{\rm Tr}_{[L^2(\mathbb{S}^2)]^6}{\rm ln} 
\left (1+M_R \mathcal{D}_R\right )=\lim_{M_R\to\infty}\frac{1}{\ln(M_R)}\sum_{j=1}^d{\rm ln} 
\left (1+M_R \lambda_j\right ) =d\leq M_T,$$
and the proof of Theorem \ref{theo} {\it (ii)} is over.

\subsection {Proof of (iii)}

Let us now examine the situation in which the operator generated by the spread function has finite rank. 
This would be the case if we have 
$N$ isolated scatterers in the environment. Then the operator $\mathcal{S}$ whose kernel is given by the spread function  
can be written in the form 
\begin{equation}\label{septemb100}
\mathcal{S}=\sum_{j,k=1}^N c_{jk} |f_j\rangle \langle g_k|,
\end{equation}
where $c_{jk}$'s are complex numbers, while $\{|f_j\rangle\}_{j=1}^N$ and $\{|g_j\rangle\}_{j=1}^N$ are 
(not necessarily unit) vectors in $[L^2(\mathbb{S}^2)]^6$. 
Remember that $M_T=M$ and $M_R=aM$, with $a>0$ a constant. Let us introduce the operator 
\begin{equation}\label{septemb10}
{D}_M:=\frac{E_T}{a N_0}\sqrt{A_T/M}\mathcal{S}^*({A}_{R}/M)\mathcal{S}\sqrt{A_{T}/M},
\end{equation}
thus:
\begin{equation}\label{septemb11}
C(M,aM)=\frac{1}{\rm ln(2)} {\rm Tr}_{[L^2(\mathbb{S}^2)]^6}{\rm ln} 
\left (1+aM{D}_M\right ).
\end{equation}
The operator ${D}_M$ can be written as:
$${D}_M=\frac{E_T}{a N_0}\sum_{j,k=1}^N \left\langle \sum_{i=1}^Nc_{ij}f_i, ({A}_{R}/M)\sum_{m=1}^Nc_{mk}f_m\right \rangle \:
|\sqrt{{A}_{T}/M}g_j\rangle \langle \sqrt{{A}_{T}/M} g_k|.$$
If we denote by:
\begin{equation}\label{iulie11-14}
d_{jk}:=\frac{E_T}{a N_0}\left\langle \sum_{i=1}^Nc_{ij}f_i, ({A}_{R}/M)\sum_{m=1}^Nc_{mk}f_m\right \rangle,\quad 
|h_j\rangle :=|\sqrt{{A}_{T}/M} \;g_j\rangle,
\end{equation}
then 
\begin{equation}\label{iulie11-15}
{D}_M=\sum_{j,k=1}^N d_{jk}|h_j\rangle\langle h_k|.
\end{equation}
We begin with a lemma:
\begin{lemma}\label{lemma10}
The matrix $d$ is non-negative, and we have the following limit:
$$\lim_{M\to\infty}d_{jk}=:\tilde{d}_{jk}=\frac{E_T}{ N_0}\left\langle \sum_{i=1}^Nc_{ij}f_i, \mathcal{A}_{R}
\sum_{m=1}^Nc_{mk}f_m\right \rangle.$$
\end{lemma}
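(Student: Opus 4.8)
The plan is to establish the two assertions separately; both follow immediately from results already in hand, so I anticipate no genuine difficulty here — this lemma is purely preparatory. Throughout, I will abbreviate $u_j:=\sum_{i=1}^N c_{ij}f_i\in[L^2(\mathbb{S}^2)]^6$, so that the definition \eqref{iulie11-14} reads $d_{jk}=\frac{E_T}{aN_0}\langle u_j,(A_R/M)u_k\rangle$. The point to keep in mind is that the $f_i$ and the coefficients $c_{jk}$ come from the decomposition \eqref{septemb100} of the fixed spread function, so the vectors $u_j$ do not depend on $M$; only $A_R$ does, through $M_R=aM$.

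For the non-negativity I would use that $A_R/M$ is a non-negative self-adjoint operator by Lemma \ref{lemma1}. Then, for any $\xi=(\xi_1,\dots,\xi_N)\in\mathbb{C}^N$, the sesquilinearity of the scalar product gives
$$\sum_{j,k=1}^N \overline{\xi_j}\,d_{jk}\,\xi_k=\frac{E_T}{aN_0}\left\langle \sum_{j=1}^N\xi_j u_j,\ (A_R/M)\Big(\sum_{k=1}^N\xi_k u_k\Big)\right\rangle\geq 0,$$
and the self-adjointness of $A_R$ also yields $d_{kj}=\overline{d_{jk}}$; hence $d$ is a Hermitian non-negative matrix.

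For the limit I would invoke the uniform operator bound \eqref{septemb7}. Since $M_R=aM$, that estimate reads $\sup_{M\geq 1}\|A_R-aM\,\mathcal{A}_R\|<\infty$, equivalently $\|A_R/M-a\,\mathcal{A}_R\|=\mathcal{O}(M^{-1})$. As $u_j$ and $u_k$ are fixed vectors,
$$\left|d_{jk}-\frac{E_T}{aN_0}\langle u_j,a\,\mathcal{A}_R u_k\rangle\right|=\frac{E_T}{aN_0}\left|\langle u_j,(A_R/M-a\,\mathcal{A}_R)u_k\rangle\right|\leq \frac{E_T}{aN_0}\,\|A_R/M-a\,\mathcal{A}_R\|\,\|u_j\|\,\|u_k\|=\mathcal{O}(M^{-1}),$$
so letting $M\to\infty$ gives $\lim_{M\to\infty}d_{jk}=\frac{E_T}{aN_0}\langle u_j,a\,\mathcal{A}_R u_k\rangle=\frac{E_T}{N_0}\langle u_j,\mathcal{A}_R u_k\rangle=\tilde d_{jk}$, which is exactly the claimed formula. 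The only thing requiring a little care is this bookkeeping of the $M$-dependence — that $A_R$ moves with $M$ while the spread-function data $f_i$, $c_{jk}$ do not — together with the observation that the decay of $\|A_R/M-a\mathcal{A}_R\|$ is precisely what the Riemann-sum hypothesis \eqref{iulie11-12} delivers via \eqref{septemb7}.
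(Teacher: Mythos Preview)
Your proof is correct and follows essentially the same route as the paper: non-negativity via Lemma~\ref{lemma1} applied to $A_R$ and a test vector $\sum_j\xi_j u_j$, and the limit via the operator-norm convergence $A_R/M\to a\,\mathcal{A}_R$ coming from the Riemann-sum hypothesis~\eqref{iulie11-12}. If anything, your citation of~\eqref{septemb7} for the convergence step is more apt than the paper's own reference to~\eqref{septemb5}, since it is $A_R$ rather than $A_T$ that appears in $d_{jk}$.
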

\begin{proof}
The convergence is implied by \eqref{septemb5}, so we only need to prove non-negativity of $d$. Choose any 
 $\{\psi_j\}_{j=1}^N\in\mathbb{C}^N$ and define 
$\Psi:=\sum_{i=1}^N\sum_{j=1}^Nc_{ij}\psi_j |f_i\rangle \in [L^2(\mathbb{S}^2)]^6$. 
Then compute:
\begin{align}
\langle \psi, d\;\psi\rangle_{\mathbb{C}^N}=\sum_{j=1}^N\sum_{k=1}^N\overline{\psi_j}d_{jk}\psi_k=
\frac{E_T}{aM N_0}\left\langle \Psi, {A}_{R}\Psi\right \rangle\geq 0,
\end{align}
where we used the non-negativity of ${A}_{R}$, see Lemma \ref{lemma1}.
\end{proof}

\vspace{0.5cm}

\begin{lemma}\label{lemma11}
Let $\phi_{jk}:=\langle h_j,h_k\rangle$ denote an $N\times N$ matrix constructed 
with the vectors introduced in \eqref{iulie11-14}.
Then $\phi$ is non-negative, and we have the following limit:
$$\lim_{M\to\infty}\phi_{jk}=:\tilde{\phi}_{jk}=\langle g_j, \mathcal{A}_{T} g_k \rangle.$$
\end{lemma}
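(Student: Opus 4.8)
The plan is to mimic almost verbatim the proof of Lemma \ref{lemma10}, since the two statements have exactly the same flavour: $\phi$ is a Gram-type matrix built out of the operator $A_T/M$, and $\mathcal{A}_T$ is its $M\to\infty$ limit. First I would write out $\phi_{jk}$ explicitly: by definition $|h_j\rangle=|\sqrt{A_T/M}\,g_j\rangle$, so
\begin{equation*}
\phi_{jk}=\langle h_j,h_k\rangle=\langle \sqrt{A_T/M}\,g_j,\sqrt{A_T/M}\,g_k\rangle=\langle g_j,(A_T/M)\,g_k\rangle,
\end{equation*}
using self-adjointness of $\sqrt{A_T/M}$ (Lemma \ref{lemma1}). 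This already makes the structure transparent.

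For non-negativity, given $\{\psi_j\}_{j=1}^N\in\mathbb{C}^N$ set $\Phi:=\sum_{j=1}^N\psi_j|g_j\rangle\in[L^2(\mathbb{S}^2)]^6$ and compute
\begin{equation*}
\langle\psi,\phi\,\psi\rangle_{\mathbb{C}^N}=\sum_{j,k=1}^N\overline{\psi_j}\,\phi_{jk}\,\psi_k=\frac{1}{M}\langle\Phi,A_T\Phi\rangle\geq 0,
\end{equation*}
where the last inequality is the non-negativity of $A_T$ from Lemma \ref{lemma1}. (Equivalently one can just note $\phi_{jk}=\langle h_j,h_k\rangle$ is manifestly a Gram matrix, hence positive semidefinite, but I prefer to match the format of Lemma \ref{lemma10}.)

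For the limit, I would invoke the first estimate in \eqref{septemb5}, namely $\|A_T/M-\mathcal{A}_T\|=\mathcal{O}(M^{-1})$ — note the square-root estimate is not even needed here since $\phi_{jk}$ only involves $A_T/M$ itself after the simplification above. Then
\begin{equation*}
|\phi_{jk}-\langle g_j,\mathcal{A}_T g_k\rangle|=|\langle g_j,(A_T/M-\mathcal{A}_T)g_k\rangle|\leq \|A_T/M-\mathcal{A}_T\|\,\|g_j\|\,\|g_k\|=\mathcal{O}(M^{-1})\to 0,
\end{equation*}
which gives $\tilde\phi_{jk}=\langle g_j,\mathcal{A}_T g_k\rangle$ as claimed. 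There is essentially no obstacle: the only mild subtlety is remembering to use self-adjointness of $\sqrt{A_T/M}$ to collapse $\langle\sqrt{A_T/M}g_j,\sqrt{A_T/M}g_k\rangle$ into $\langle g_j,(A_T/M)g_k\rangle$ before taking the limit, so that one avoids having to control the convergence of the square roots (which \eqref{septemb5} only gives with unspecified speed). With that bookkeeping done, both assertions follow in a couple of lines.
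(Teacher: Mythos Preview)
Your proof is correct and follows exactly the approach indicated in the paper (which says only that ``the non-negativity of $\phi$ follows in the same way as for $d$, using the non-negativity of $A_T$'' and gives no further details). One small point worth flagging: the paper attributes the convergence to \eqref{septemb7}, which is the $A_R$ estimate, whereas you invoke \eqref{septemb5}, the $A_T$ estimate; since $\phi_{jk}=\langle g_j,(A_T/M)g_k\rangle$, your citation is the appropriate one and the paper's reference appears to be a typographical slip (the references in Lemmas \ref{lemma10} and \ref{lemma11} look interchanged).
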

\begin{proof}
The convergence is implied by \eqref{septemb7}, while the non-negativity of $\phi$ follows in the same way as for $d$, 
using the non-negativity of ${A}_{T}$ proved in Lemma \ref{lemma1}. We do not give further details.
\end{proof}

\vspace{0.5cm}

The following result expresses the capacity as a determinant of a finite rank matrix, uniformly in $M$.
\begin{lemma}\label{lemma4}  We have the identity:
\begin{equation}
 C(M,aM)=\frac{1}{\ln(2)} {\rm ln}\;{\rm det}\left (1+aM\;\sqrt{d} \phi \sqrt{d}\right ).
\end{equation} 
\end{lemma}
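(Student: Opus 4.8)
The plan is to transfer the trace over the infinite-dimensional space $[L^2(\mathbb{S}^2)]^6$ in \eqref{septemb11} to a trace over $\mathbb{C}^N$, using the fact that $D_M$ has the finite rank representation \eqref{iulie11-15}. First I would introduce the bounded operator $G\colon \mathbb{C}^N\to[L^2(\mathbb{S}^2)]^6$ defined by $G e_j = |h_j\rangle$ (where $e_j$ is the canonical basis of $\mathbb{C}^N$), so that its adjoint $G^*$ satisfies $(G^*\Psi)_j=\langle h_j,\Psi\rangle$. Then $D_M = G\,d\,G^*$ on $[L^2(\mathbb{S}^2)]^6$, while $G^*G=\phi$ on $\mathbb{C}^N$, with $d$ and $\phi$ the matrices of Lemmas \ref{lemma10} and \ref{lemma11}. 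Both $d\geq 0$ and $\phi\geq 0$, so their square roots are well defined.

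The key step is the standard ``swap'' identity for Fredholm determinants: for trace class $X\colon \mathcal{H}_1\to\mathcal{H}_2$ and $Y\colon\mathcal{H}_2\to\mathcal{H}_1$ one has $\det_{\mathcal{H}_2}(1+XY)=\det_{\mathcal{H}_1}(1+YX)$. I would apply this with a symmetric splitting so that both sides are determinants of non-negative operators. Writing $aM\,D_M = aM\,G\,d\,G^* = (aM)\,(G\sqrt d)(\sqrt d\,G^*)$ and taking $X = \sqrt{aM}\,G\sqrt d$, $Y=\sqrt{aM}\,\sqrt d\,G^*$, the swap gives
\begin{equation}
{\rm det}_{[L^2(\mathbb{S}^2)]^6}\left(1+aM\,D_M\right)={\rm det}_{\mathbb{C}^N}\left(1+aM\,\sqrt d\,G^*G\,\sqrt d\right)={\rm det}_{\mathbb{C}^N}\left(1+aM\,\sqrt d\,\phi\,\sqrt d\right).
\end{equation}
Combined with the elementary fact that for a non-negative trace class operator $\mathrm{Tr}\,{\rm ln}(1+T)={\rm ln}\,{\rm det}(1+T)$ (which also underlies Proposition \ref{cor1}), applying this to $T=aM\,D_M$ in \eqref{septemb11} yields
\begin{equation}
C(M,aM)=\frac{1}{\ln(2)}\,{\rm Tr}_{[L^2(\mathbb{S}^2)]^6}\,{\rm ln}\left(1+aM\,D_M\right)=\frac{1}{\ln(2)}\,{\rm ln}\,{\rm det}\left(1+aM\,\sqrt d\,\phi\,\sqrt d\right),
\end{equation}
which is the claimed identity.

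The only subtlety, and the main thing to be careful about, is justifying the swap identity and the passage between the operator logarithm trace and the $\log\det$ in the genuinely infinite-dimensional setting: one must check that the relevant operators are trace class (here automatic since $D_M$ has finite rank $\le N$, and $G$, $G^*$ are bounded, so $G\sqrt d$ and $\sqrt d\,G^*$ are finite rank hence trace class), and that $1+aM\,D_M$ is invertible so that the principal logarithm and the Fredholm determinant are both well defined — this holds because $D_M\ge 0$ forces the spectrum of $aM\,D_M$ into $[0,\infty)$. I would also remark that the non-negativity of $\sqrt d\,\phi\,\sqrt d$ (needed so that the determinant is $\ge 1$ and its logarithm is real and non-negative, consistent with $C\ge 0$) follows since it is similar, via $\sqrt d$, to $\sqrt\phi\,d\,\sqrt\phi\ge 0$, or directly since $\sqrt d\,\phi\,\sqrt d=(\sqrt\phi\sqrt d)^*(\sqrt\phi\sqrt d)$ up to the obvious rearrangement. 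None of these points is deep, so the proof is essentially the bookkeeping of the finite-rank reduction; the ``hard part'' is merely stating the functional-calculus identities at the right level of generality.
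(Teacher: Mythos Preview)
Your argument is correct. The route you take is, however, a bit different from the paper's. The paper does not invoke the Sylvester/Fredholm ``swap'' identity $\det(1+XY)=\det(1+YX)$ as a black box; instead it proves directly that
\[
{\rm Tr}_{[L^2(\mathbb{S}^2)]^6}\,{\rm Ln}(1+zD_M)={\rm Tr}_{\mathbb{C}^N}\,{\rm Ln}(1+z\sqrt{d}\,\phi\,\sqrt{d})
\]
for $|z|$ small by matching the power series term by term, i.e.\ by showing ${\rm Tr}\,D_M^{\,p}={\rm Tr}\,(\sqrt{d}\,\phi\,\sqrt{d})^{\,p}$ for every $p\ge 1$. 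This in turn is obtained from the explicit formula $D_M^{\,p}=\sum_{j,k}[d(\phi d)^{p-1}]_{jk}\,|h_j\rangle\langle h_k|$, proved by induction, followed by trace cyclicity; analytic continuation then gives $z=aM$. Your structural packaging via the map $G$ with $D_M=G\,d\,G^*$ and $G^*G=\phi$ is exactly what lies behind that induction, and the swap identity is nothing but the statement that $XY$ and $YX$ share all nonzero eigenvalues --- equivalently that ${\rm Tr}(XY)^p={\rm Tr}(YX)^p$ --- so the two proofs are the same computation viewed at different levels of abstraction. Your version is cleaner and shorter, at the price of quoting the Fredholm swap identity; the paper's version is self-contained and stays within the tools already set up in Section~\ref{section1}. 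A tiny remark: your last parenthetical is in fact exact, since $(\sqrt{\phi}\sqrt{d})^*(\sqrt{\phi}\sqrt{d})=\sqrt{d}\,\phi\,\sqrt{d}$ with no rearrangement needed.
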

\begin{proof}
It is enough to prove the equality 
\begin{equation}\label{iulie11-17}
 {\rm Tr}_{[L^2(\mathbb{S}^2)]^6}\;{\rm Ln} \left (1+zD_M\right )={\rm Tr}_{\mathbb{C}^N}\;{\rm Ln}
\left (1+z\sqrt{d} \phi \sqrt{d}\right )
\end{equation}
for any $z$ with $|z|$ sufficiently small. Then since both sides of \eqref{iulie11-17} are analytic 
functions in the half plane ${\rm Re}(z)>0$, the equality will also hold for  $z=a M$. 
 We will show that both sides of \eqref{iulie11-17} are given by the same power series around $z=0$, which amounts to 
proving that ${\rm Tr}_{[L^2(\mathbb{S}^2)]^6}\;{D}_M^p={\rm Tr}_{\mathbb{C}^N}\;(\sqrt{d} \phi \sqrt{d})^p$ for any $p\geq 1$. 
This is in fact a direct consequence of the identity (easily provable by induction)
$${D}_M^p=\sum_{j,k=1}^N [d(\phi d)^{p-1}]_{jk}|h_j\rangle\langle h_k|,\quad p\geq 1,$$
in which one has to take the trace and use its cyclicity in order to move a $\sqrt{d}$ from left to the right. The proof is over. 
\end{proof}

\vspace{0.5cm}

\begin{lemma}\label{lemma13}  The capacity grows at most like a logarithm:
\begin{equation}
 \limsup_{M\to\infty}\frac{C(M,aM)}{\ln(M)}\leq \frac{N}{\ln(2)} <\infty.
\end{equation} 
\end{lemma}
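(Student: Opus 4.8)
The plan is to start from Lemma \ref{lemma4}, which already reduces the capacity to a finite-dimensional determinant:
$$C(M,aM)=\frac{1}{\ln(2)}\;{\rm ln}\;{\rm det}\left(1+aM\;\sqrt{d}\phi\sqrt{d}\right).$$
The matrix $\sqrt{d}\phi\sqrt{d}$ is a non-negative $N\times N$ matrix, so it has at most $N$ eigenvalues $\mu_1(M),\dots,\mu_N(M)\geq 0$, and the determinant factors as $\prod_{j=1}^N(1+aM\mu_j(M))$. Therefore
$$\frac{C(M,aM)}{\ln(M)}=\frac{1}{\ln(2)}\sum_{j=1}^N\frac{\ln(1+aM\mu_j(M))}{\ln(M)}.$$
So the whole question reduces to controlling the growth of each term $\ln(1+aM\mu_j(M))/\ln(M)$ as $M\to\infty$.

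Next I would use Lemmas \ref{lemma10} and \ref{lemma11}: the entries of $d$ converge to $\tilde d$ and the entries of $\phi$ converge to $\tilde\phi$ as $M\to\infty$, both limiting matrices being non-negative and finite. Hence $\sqrt{d}\phi\sqrt{d}\to\sqrt{\tilde d}\tilde\phi\sqrt{\tilde d}$, and in particular the sequence of matrices $\{\sqrt{d}\phi\sqrt{d}\}_M$ is uniformly bounded in norm, say by some constant $\kappa<\infty$. Consequently each eigenvalue satisfies $0\leq\mu_j(M)\leq\kappa$ uniformly in $M$, which gives
$$\ln(1+aM\mu_j(M))\leq\ln(1+aM\kappa).$$
Then for each fixed $j$,
$$\limsup_{M\to\infty}\frac{\ln(1+aM\mu_j(M))}{\ln(M)}\leq\limsup_{M\to\infty}\frac{\ln(1+aM\kappa)}{\ln(M)}=1,$$
since $\ln(1+aM\kappa)=\ln(M)+\ln(a\kappa+1/M)$ and the second term is bounded. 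Summing the $N$ contributions and dividing by $\ln(2)$ yields exactly
$$\limsup_{M\to\infty}\frac{C(M,aM)}{\ln(M)}\leq\frac{N}{\ln(2)}<\infty.$$

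I do not expect any serious obstacle here: the finite-rank reduction in Lemma \ref{lemma4} is the substantive input, and once the capacity is a product of $N$ factors $1+aM\mu_j(M)$ with each $\mu_j(M)$ bounded uniformly in $M$, the logarithmic bound is essentially automatic. The one place requiring a little care is justifying the uniform bound on $\mu_j(M)$; this follows because convergence of the entries $d_{jk}\to\tilde d_{jk}$ and $\phi_{jk}\to\tilde\phi_{jk}$ (Lemmas \ref{lemma10} and \ref{lemma11}) implies convergence of $\sqrt{d}\phi\sqrt{d}$ in any norm on the finite-dimensional space of $N\times N$ matrices, hence boundedness of the whole sequence, hence a uniform spectral bound. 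Alternatively one can bound $\mu_j(M)\leq\|d\|\,\|\phi\|$ directly and note both operator norms are controlled by $\|A_T/M\|$ and $\|A_R/M\|$, which are bounded by \eqref{septemb5} and \eqref{septemb7} together with the finitely many fixed vectors $f_i,g_j$ and coefficients $c_{ij}$. Either route closes the argument.
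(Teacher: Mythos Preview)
Your proposal is correct and follows essentially the same route as the paper: use Lemma~\ref{lemma4} to reduce to the $N\times N$ matrix $\sqrt{d}\,\phi\,\sqrt{d}$, invoke Lemmas~\ref{lemma10} and~\ref{lemma11} to get convergence (hence a uniform eigenvalue bound $\lambda$), and then bound each of the $N$ logarithmic terms by $\ln(1+aM\lambda)$. The paper's proof is slightly terser but structurally identical.
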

\begin{proof}
Seen as an operator on $\mathbb{C}^N$, $\sqrt{d} \phi \sqrt{d}$ converges in operator norm to 
$\sqrt{\tilde{d}} \;\tilde{\phi} \;\sqrt{\tilde{d}}$ when $M$ grows. Due to regular perturbation theory, it follows that all $N$  
eigenvalues of $\sqrt{d} \phi \sqrt{d}$ are uniformly bounded in $M$, say by a number $\lambda>0$. Thus 
\begin{equation}\label{septemb101}
C(M,aM)\leq \frac{N}{\ln(2)}\; \ln(1+aM\lambda)\leq  \frac{N}{\ln(2)}\;\{\ln(M)+ \ln(1+a\lambda)\},\quad M\geq 1,
\end{equation}
which ends the proof of Theorem \ref{theo} {\it (iii)}.  
\end{proof}

\subsection{Proof of (iv)}
Remember that here we no longer demand $\mathcal{S}$ to have finite rank, but we assume that it has a smooth integral kernel. 
If $L_B$ denotes the usual (non-negative) Laplace-Beltrami operator densely defined in $L^2(\mathbb{S}^2)$, 
then we denote by $\tilde{L}_B:=\oplus_{j=1}^6L_B$ the corresponding operator in $[L^2(\mathbb{S}^2)]^6$. The smoothness of 
$s(\Omega,\Omega')$ implies that for every natural number $n$, the operators 
$$\mathcal{S}_n:=\tilde{L}_B^n\mathcal{S},\quad \mathcal{S}_n^*=\mathcal{S}^*\tilde{L}_B^n$$
are Hilbert-Schmidt operators. We know that $L_B$ has purely discrete spectrum and the distribution of eigenvalues obeys 
Weyl's law. More precisely, for every $E>0$ define $P_E$ and $\tilde{P}_E$ to be the projectors corresponding to the 
eigenvalues of $L_B$ and respectively $\tilde{L}_B$ 
which are less or equal than $E$. Then it is well known \cite{Horm} that:
\begin{equation}\label{sept11_22}
 \lim_{E\to\infty}\frac{{\dim}(P_E)}{E}=1,\quad \lim_{E\to\infty}\frac{{\dim}(\tilde{P}_E)}{E}=6.
\end{equation}

The starting point of our proof are formulas \eqref{septemb10} and \eqref{septemb11}. Introduce
\begin{equation}
T_1=aM{D}_M,\quad T_2=aM\frac{E_T}{a N_0}\sqrt{A_T/M}\mathcal{S}^*\tilde{P}_E({A}_{R}/M)\tilde{P}_E\mathcal{S}\sqrt{A_{T}/M},
\end{equation}
where $T_2$ is obtained by inserting two projections $\tilde{P}_E$ inside $D_M$. 
Then it is easy to see that there exists a constant $C_1>0$ independent of $M$ and $E$ such that 
\begin{equation}
||T_1- T_2||_1\leq C_1\; M ||(1-\tilde{P}_E)\mathcal{S}||_2,\quad M\geq 1.
\end{equation}
But $$||(1-\tilde{P}_E)\mathcal{S}||_2=||(1-\tilde{P}_E)\tilde{L}_B^{-n}\mathcal{S}_n||_2\leq 
E^{-n}||\mathcal{S}_n||_2$$
for all $E>0$, thus 
\begin{equation}\label{septemb21}
||T_1- T_2||_1\leq C_1\; M E^{-n}||\mathcal{S}_n||_2,\quad M\geq 1,\quad E>0.
\end{equation}
The crucial observation is that $\tilde{P}_E\mathcal{S}$ has finite rank, equal to $N={\rm dim}(\tilde{P}_E)$, 
and the method of Theorem \ref{theo} {\it (iii)} can be applied. The only problem is that we cannot be sure that the bound 
$\lambda$ in \eqref{septemb101} can be chosen independent of $N$, but we will now show that in the worst case scenario $\lambda$ 
grows proportionally with $N^2$. 

\begin{lemma}\label{lemma20}
Let $N={\rm dim}(\tilde{P}_E)$. There exists another constant $C_2$ independent of 
$M$ and $N$ such that:
\begin{equation}\label{lemma200}
{\rm Tr}\{{\rm ln}(1+T_2)\}\leq N \ln(1+C_2MN^2).
\end{equation}
\end{lemma}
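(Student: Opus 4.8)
The plan is to mimic the structure of the proof of Theorem~\ref{theo}~\emph{(iii)} but to track how the eigenvalue bound $\lambda$ in \eqref{septemb101} depends on $N = \dim(\tilde P_E)$. By Lemma~\ref{lemma4} applied with $\tilde P_E\mathcal{S}$ in place of $\mathcal{S}$ (this operator has rank $N$ by construction), we have the finite-dimensional representation ${\rm Tr}\{{\rm ln}(1+T_2)\} = {\rm ln}\,{\rm det}(1 + aM\sqrt{d}\,\phi\,\sqrt{d})$ for an $N\times N$ non-negative matrix $\sqrt{d}\,\phi\,\sqrt{d}$, where $d$ and $\phi$ are the analogues of \eqref{iulie11-14} built from an orthonormal basis $\{e_1,\dots,e_N\}$ of ${\rm Ran}(\tilde P_E)$. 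Since $\|{\rm ln}(1+X)\| \le N\,{\rm ln}(1+\|X\|)$ for a non-negative $N\times N$ matrix $X$ (bounding the trace of the logarithm by $N$ times the logarithm of the largest eigenvalue), it suffices to show $\|\sqrt{d}\,\phi\,\sqrt{d}\| \le C_2' N^2$, or even just $\le C_2' N^2 / M$ after accounting for the prefactor $aM$, with $C_2'$ independent of $M$ and $N$.

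First I would estimate $\|d\|$ and $\|\phi\|$ separately and use $\|\sqrt{d}\,\phi\,\sqrt{d}\| \le \|d\|\,\|\phi\|$. For $\phi$, the entries are $\phi_{jk} = \langle \sqrt{A_T/M}\,e_j, \sqrt{A_T/M}\,e_k\rangle$, so $\phi$ is the Gram matrix of the vectors $\sqrt{A_T/M}\,e_j$; its operator norm is bounded by ${\rm Tr}(\phi) = \sum_j \|\sqrt{A_T/M}\,e_j\|^2 \le N\,\|A_T/M\|$, which is $\le C N$ uniformly in $M$ by \eqref{septemb5} (since $\|A_T/M\|\to\|\mathcal{A}_T\|$). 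Similarly $d_{jk}$ is built from $\langle \cdots, (A_R/M)\cdots\rangle$ with the $e_i$'s replacing the $f_i$'s, and is again a (rescaled) Gram-type matrix, so $\|d\| \le {\rm Tr}(d) \le C' N$ uniformly in $M$ using the boundedness of $A_R/M$. Multiplying, $\|\sqrt{d}\,\phi\,\sqrt{d}\| \le C C' N^2$, and then ${\rm Tr}\{{\rm ln}(1+T_2)\} = {\rm ln}\,{\rm det}(1+aM\sqrt{d}\,\phi\,\sqrt{d}) \le N\,{\rm ln}(1 + aM\,C C' N^2) \le N\,{\rm ln}(1 + C_2 M N^2)$ for a suitable $C_2$.

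The main obstacle is making sure the constants in the Gram-matrix trace bounds are genuinely uniform in both $M$ and $N$; this relies on \eqref{septemb5} and \eqref{septemb7}, which give $\sup_M \|A_T/M\| < \infty$ and $\sup_M \|A_R/M\| < \infty$ (the latter from $\|A_R - M_R\mathcal{A}_R\|$ bounded and $M_R = aM$), together with the fact that $\{e_j\}$ is orthonormal so that the number of terms is exactly $N$. A secondary point to handle carefully is the reduction via Lemma~\ref{lemma4}: that lemma is stated for $D_M$ coming from a rank-$N$ spread operator of the form \eqref{septemb100}, so I would note that $\tilde P_E\mathcal{S}$ indeed admits such a representation (write $\tilde P_E\mathcal{S} = \sum_{j} |e_j\rangle\langle \mathcal{S}^*e_j|$, which is of the required finite-rank form), and that $T_2$ is exactly $aM$ times the corresponding $D_M$. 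Everything else is the elementary inequality ${\rm Tr}\,{\rm ln}(1+X) \le N\,{\rm ln}(1+\|X\|)$ for non-negative $X$ on $\mathbb{C}^N$, plus $\|\sqrt{d}\,\phi\,\sqrt{d}\| \le \|d\|\,\|\phi\|$.
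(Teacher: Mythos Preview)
Your approach is essentially identical to the paper's: write $\tilde P_E\mathcal S=\sum_j|e_j\rangle\langle\mathcal S^*e_j|$ so that $f_j=e_j$, $c_{jk}=\delta_{jk}$, $g_j=\mathcal S^*e_j$ in \eqref{septemb100}, invoke Lemma~\ref{lemma4}, and bound $\|\sqrt d\,\phi\,\sqrt d\|\le\|d\|\,\|\phi\|\le C_3 N^2$, then feed this into \eqref{septemb101}.

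One slip worth fixing: since $g_j=\mathcal S^*e_j$ (not $e_j$), the entries of $\phi$ are $\phi_{jk}=\langle\sqrt{A_T/M}\,\mathcal S^*e_j,\sqrt{A_T/M}\,\mathcal S^*e_k\rangle$, not $\langle\sqrt{A_T/M}\,e_j,\sqrt{A_T/M}\,e_k\rangle$ as you wrote. This does not damage the argument---your trace bound still yields ${\rm Tr}(\phi)=\sum_j\|\sqrt{A_T/M}\,\mathcal S^*e_j\|^2\le N\,\|A_T/M\|\,\|\mathcal S\|^2\le CN$ uniformly in $M$---but the formula should be corrected. The paper obtains the same $O(N)$ bounds on $\|d\|$ and $\|\phi\|$ by observing that each matrix entry is bounded uniformly in $j,k,M$ (so an $N\times N$ matrix with such entries has norm at most $CN$), rather than via the Gram-matrix trace; the two routes are equivalent here.
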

\begin{proof}
We express $\tilde{P}_E$ as $\sum_{j=1}^N|\psi_j\rangle\langle\psi_j|$, where the 
$\psi_j$'s are normalized eigenfunctions of $\tilde{L}_B$ spanning the range of $\tilde{P}_E$. Then 
$$\tilde{P}_E\mathcal{S}=\sum_{j=1}^N|\psi_j\rangle\langle \mathcal{S}^*\psi_j|.$$
Comparing with \eqref{septemb100} we see that $f_j=\psi_j$, $c_{jk}=\delta_{jk}$ and $g_j=\mathcal{S}^*\psi_j$. Looking at the 
definition of $d_{jk}$ in \eqref{iulie11-14} we see that $|d_{jk}|$ is bounded uniformly in $j$, $k$ and $M$. Thus the norm 
of the matrix $d$ can grow at most as $N$. The same conclusion holds for the matrix $\phi$ defined in Lemma \ref{lemma11}. 
It means that the norm of $\sqrt{d}\phi \sqrt{d}$ (which is equal to its largest eigenvalue) grows at most as $N^2$. Thus we 
can choose some $\lambda =C_3N^2$ with $C_3$ a constant independent of $M$ and $N$ and use it in \eqref{septemb101}. The proof 
is over.
\end{proof}

\vspace{0.5cm}

Now using \eqref{lemma200}, \eqref{septemb21} and \eqref{sept11_22} in \eqref{septemb1} we obtain that for every $n\geq 1$ there exists a constant 
$K_n>1$ independent of $M$ and $E$ such that:
$$C(M,aM)\leq K_n\left ( E \ln(1+K_n M\; E^2)+ M E^{-n}\right ).$$
Now fix an $0<\epsilon<1$. Choose $E=M^{\frac{1-\epsilon}{n}}$ and introduce it in the above estimate. Since $K_n M\; E^2>1$, we 
have 
$$\ln(1+K_n M\; E^2)=\ln(1+K_n^{-1} M^{-1}\; E^{-2})+ \ln(K_n M\; E^2)\leq \ln(2)+\ln(K_n)+(1+\frac{2-2\epsilon}{n})\ln(M).$$  
Thus we get another constant $\tilde{K}_n>1$ such that uniformly in $M>1$ and $0<\epsilon<1$ we have:
 $$C(M,aM)\leq \tilde{K}_n\left (1+ M^{\frac{1-\epsilon}{n}} \ln(M)+ M^\epsilon\right ).$$
But now we can choose $n_\epsilon$ to be the smallest natural number such that $\frac{1-\epsilon}{n}\leq \epsilon/2$. Thus 
$\sup_{M\geq 1}\frac{C(M,aM)}{M^\epsilon}<\infty$ and the proof of
Theorem \ref{theo} is over.

\section{Conclusions}
In the case when the antennas occupy a given volume, our mathematical
results will not constitute a big surprise for the engineers and researchers
who have been involved in this type of MIMO studies and who have also
predicted a sublinear behavior, even though they used ad-hoc statistical models for
the transfer matrix. In this scenario, both our
deterministic model and the stochastic ones seem to predict that the
capacity growth can no longer be
considered as linear if the number of antennas passes over a relatively low
threshold. 

If the distance in between the antennas is maintained constant,
the situation is rather different. All standard statistical models
predict a linear increase in this case. But in a forthcoming paper we will
apply our deterministic model in order to confirm the results of 
\cite{BS} which predicted a sublinear behavior even in this
scenario. We will show that the sublinear growth begins from a not so
large threshold value of the number of antennas, and that the
reachness of the scattering environment is at least as important as
the number of antennas.  This shows that the discussion on the models is important.

\section{Acknowledgments}

The authors acknowledge support from the Danish FNU grant 
{\it Mathematical Physics}. We also thank B. Fleury for drawing to our attention the reference \cite{Po}.

\end{document}